\tikzset{drbox/.style={
  rounded rectangle,
  minimum size=2mm,
  very thick,draw=black!50,
  top color=white,bottom color=black!20,
  font=\ttfamily}}
\tikzset{prbox/.style={
  rounded rectangle,
  minimum size=2mm,
  very thick,draw=white,
  top color=white,bottom color=white,
  font=\ttfamily}}
\tikzset{mode/.style={font=\scriptsize,execute at begin node=$,execute
    at end node=$}}
\DeclareSymbolFont{frenchscript}{OMS}{ztmcm}{m}{n}
\DeclareMathSymbol{\fL}{\mathord}{frenchscript}{76}  
\DeclareMathSymbol{\fT}{\mathord}{frenchscript}{84}  
\DeclareMathSymbol{\V}{\mathord}{frenchscript}{88}   
\spnewtheorem{observation}{Observation}{\bfseries}{\itshape}
\newcommand{\df}[1]{Definition~\ref{df:#1}}
\newcommand{\thm}[1]{Theorem~\ref{thm:#1}}
\newcommand{\lem}[1]{Lemma~\ref{lem:#1}}
\newcommand{\obs}[1]{Observation~\ref{obs:#1}}
\newcommand{\sect}[1]{Section~\ref{sec:#1}}
\newcommand{\tab}[1]{Table~\ref{tab:#1}}
\newfont{\bbb}{bbm10 scaled 1100}        
\newcommand{\T}{\mathcal{P}}             
\newcommand{\bbT}{\T}
\newcommand{\pT}{T}                      
\newcommand{\fTHT}{\fT_{\textrm{HT}}}       
\newcommand{\N} {{\cal N}}               
\newcommand{\n}{{\it n}}                 
\newcommand{\bn}{{\it bn}}               
\newcommand{\fn}{{\it fn}}               
\newcommand{\subs}[2]{\{\mathord{\raisebox{2pt}[0pt]{$#1$}\!/\!#2}\}} 
\renewcommand{\nu}{}
\newcommand{\plat}[1]{\raisebox{0pt}[0pt][0pt]{#1}}     
\newcommand\ie{i.\,e.~}
\newcommand\eg{e.\,g.~}
\newcommand\cf{cf.~}
\newcommand{\trans}[1]{\fT(#1)}
\newcommand{\subtrans}{\fT}
\newcommand{\op}{\mathtt{op}}
\newcommand{\nil}{\mathbf{0}}
\newcommand{\equred}{\mathrel{\equiv}}
\newcommand{\transtau}{\mathrel{\raisebox{0pt}[8pt][0pt]{$\stackrel{\tau}{\longrightarrow}$}^{\!*\,}}}
\newcommand{\wesim}{\ensuremath{\mathrel{\approx_{\mathrm{EWB}}}}}
\newcommand{\wbb}{\stackrel{\raisebox{-1pt}[0pt][0pt]{$\scriptscriptstyle \bullet$}}{\approx}}
\newcommand{\wssbb}{\wbb^\surd}
\newcommand{\wssbbd}{\wbb^{\!\surd\!\Delta}\!}
\newcommand{\wbbisim}{\ensuremath{\mathrel{\approx_{\mathrm{AWBB}}}}}
\newcommand{\wotau}{\ensuremath{\mathrel{\approx_{\mathrm{W}o\tau}}}}
\newcommand{\wab}{\ensuremath{\mathrel{\approx_{\mathrm{WAB}}}}}
\newcommand{\wlbsim}{\ensuremath{\mathrel{\approx_{\mathrm{WCB}}}}}
\newcommand{\wcomm}{\ocomm}
\newcommand{\scomm}{\sbarb}
\newcommand{\sbarb}[1]{{\downarrow_{#1}}}
\newcommand{\ocomm}[1]{{\Downarrow_{#1}}}
\newcommand{\ascomm}[1]{\downarrow^{\mathrm{c}}_{#1}}
\newcommand{\acomm}[1]{\Downarrow^{\mathrm{c}}_{#1}}
\newcommand{\mrel}{\ensuremath{\mathrel{\mathcal{R}}}}
\newcommand\piCal{$\pi$-calculus}
\newcommand{\pims}{\pi}
\newcommand{\pima}{{\rm a\pi}}
\newcommand{\Act}{\mathrm{Act}}
\newcommand{\sI}{\fT}
\begin{document}
\title{Stronger Validity Criteria\texorpdfstring{\\}{} for Encoding Synchrony\thanks{This work was partially supported by the DFG (German Research Foundation).}}
\author{Rob van Glabbeek\inst{1,2}, Ursula Goltz\inst{3},
        Christopher Lippert\inst{3}, Stephan Mennicke\inst{3}}
\authorrunning{R.J. van Glabbeek, U. Goltz, Ch. Lippert and S. Mennicke}
\institute{Data61, CSIRO, Sydney, Australia \and
  Computer Sc.\ and Engineering,
  University of New South Wales, Sydney, Australia \and
  Institute for Programming and Reactive Systems, TU Braunschweig, Germany}

\maketitle
\setcounter{footnote}{0}

\begin{abstract}
We analyse two translations from the synchronous into the asynchronous $\pi$-calculus, both without choice,
that are often quoted as standard examples of valid encodings, showing that the asynchronous
$\pi$-calculus is just as expressive as the synchronous one.
We examine which of the quality criteria for encodings from the literature support the validity
of these translations. Moreover, we prove their validity according to much stronger
criteria than considered previously in the literature.

\keywords{Process calculi \and expressiveness \and translations \and quality criteria for encodings \and
 valid encodings \and compositionality \and operational correspondence \and semantic equivalences \and asynchronous $\pi$-calculus.}
\vspace{3ex}

\emph{This paper is dedicated to Catuscia Palamidessi, on the occasion of her birthday.
It has always been a big pleasure and inspiration to discuss with her.}
\end{abstract}

\section{Introduction}

In the literature, many definitions are proposed of what it means for one system description
language to encode another one. Each concept $C$ of a valid encoding yields an ordering of system
description languages with respect to expressive power: language $\fL'$ is \emph{at least as expressive as}
language $\fL$ (according to $C$), notation $\fL \preceq_C \fL'$, iff a valid encoding from $\fL$ to $\fL'$ exists.
The concepts of a valid encoding themselves, the \emph{validity criteria}, also can be ordered:
criterion $C$ is \emph{stronger} than criterion $D$ iff for each two system
description languages $\fL$ and $\fL'$ one has\vspace{-1ex}
            $$\fL \preceq_C \fL' ~~\Rightarrow~~ \fL \preceq_D \fL'\;.$$
Naturally, employing a stronger validity criterion constitutes a stronger claim that the
target language is at least as expressive as the source language.

In this paper, we analyse two well-known translations from the synchronous into the asynchronous
$\pi$-calculus, one by Boudol and one by Honda \& Tokoro. Both are often quoted as standard examples of
valid encodings. We examine which of the validity criteria from the literature support the validity
of these encodings. Moreover, we prove the validity of these encodings according to much stronger
criteria than considered previously in the literature.
\pagebreak[3]

A \emph{translation} $\fT$ from (or \emph{encoding} of) a language $\fL$ into a language $\fL'$ is a
function from the $\fL$-expressions to the $\fL'$-expressions. The first formal definition of
a \emph{valid} encoding of one system description language into another stems from Boudol
\cite{Bo85}. It is parametrised by the choice of a semantic equivalence $\sim$ that is meaningful for the
source as well as the target language of the translation---and is required to be a congruence for
both. Boudol in particular considers languages whose semantics are given in terms of labelled
transition systems. Any semantic equivalence defined on labelled transition systems, such as
strong bisimilarity, induces an equivalence on the expressions of such languages, and thus allows
comparison of expressions from different languages of this kind. Boudol formulates two requirements
for valid translations: (1) they should be \emph{compositional}, and (2) for each source language expression $P$,
its translation $\fT(P)$---an expression in the target language---is semantically equivalent to $P$.
\advance\textheight 1pt

\advance\textheight -1pt
Successive generalisations of the definition of a valid encoding from Boudol \cite{Bo85} appear in \cite{vG94a,vG12,vG18b}.
These generalisations chiefly deal with languages that feature process variables, and that are
interpreted in a semantic domain (such as labelled transition systems) where not every semantic value
need be denotable by a closed term. The present paper, following \cite{Bo85} and most of the
expressiveness literature, deals solely with \emph{closed-term} languages,
in which the distinction between syntax and semantic is effectively dropped by taking the domain of
semantic values, in which the language is interpreted, to consist of the closed terms of the language.
In this setting the only generalisation of the notion of a valid encoding from
\cite{vG94a,vG12,vG18b} over \cite{Bo85} is that Boudol's congruence requirement on the semantic equivalence 
up to which languages are compared is dropped. In \cite{vG18b} it is also shown that the requirement of
compositionality can be dropped, as in the presence of process variables it is effectively implied
by the requirement that semantic equivalence is preserved upon translation. But when dealing with
languages without process variables, as in the present paper, it remains necessary to require
compositionality separately.

A variant of the validity criterion from Boudol is the notion of \emph{full abstraction}, employed
in \cite{Rie91,Sha91,Sha92,NestmannP00,Nestmann00,BPV05,Fu16}. In this setting, instead of a single
semantic equivalence $\sim$ that is meaningful for the source as well as the target language of the
translation, two semantic equivalences $\sim_{\rm S}$ and $\sim_{\rm T}$ are used as parameters of
the criterion, one on the source and one on the target language. Full abstraction requires, for
source expressions $P$ and $Q$, that
$P \sim_{\rm S} Q \Leftrightarrow \fT(P) \sim_{\rm T} \fT(Q)$.
Full abstraction has been criticised as a validity criteria for encodings in \cite{BPV08,GN16,Parrow16};
a historical treatment of the concept can be found in \cite[Sect.~18]{vG18e}.

An alternative for the equivalence-based validity criteria reviewed above are the ones employing
\emph{operational correspondence}, introduced by Nestmann \& Pierce in \cite{NestmannP00}.
Here valid encodings are required to satisfy various criteria, differing subtly from paper to
paper; often these criteria are chosen to conveniently establish that a given language is or is not
as least as expressive as another. Normally some form of operational correspondence is one of these
criteria, and as a consequence of this these approaches are suitable for comparing the
expressiveness of process calculi with a reduction semantics, rather than system description
languages in general. Gorla \cite{Gorla10a} has selected five of these criteria as a
unified approach to encodability and separation results for process calculi---\emph{compositionality},
\emph{name invariance}, \emph{operational correspondence}, \emph{divergence reflection} and
\emph{success sensitiveness}---and since then these criteria have been widely accepted as
constituting a standard definition of a valid encoding.
\advance\textheight 1pt

\advance\textheight -1pt
In \cite{Palamidessi03} Catuscia Palamidessi employs four requirements for valid encodings between
languages that both contain a parallel composition operator $|$:
compositionality, preservation of semantics, a form of name invariance, and the requirement that
parallel composition is translated homomorphically, i.e., $\fT(P|Q) = \fT(P) | \fT(Q)$.
The latter is not implied by any of the requirements considered above.
The justification for this requirement is that it ensures that the translation maintains the degree
of distribution of the system. However, Peters, Nestmann \& Goltz \cite{PNG13} argue that it is possible to
maintain the degree of distribution of a system upon translation without requiring a homomorphic
translation of $|$; in fact they introduce the criterion \emph{preservation of distributability}
that is weaker then the homomorphic translation of $|$.

This paper analyses the encodings $\fT_{\rm B}$ and $\fTHT$ of Boudol and Honda \&
Tokoro of the synchronous into the asynchronous $\pi$-calculus, both without the choice operator $+$.
Our aim is to evaluate the validity of these encodings with respect to all criteria for valid
encodings summarised above.

\sect{pi2api} recalls the encodings $\fT_{\rm B}$ and $\fT_{\rm HT}$.
\sect{Gorla-validity} reviews the validity criteria from Gorla \cite{Gorla10a}, and recalls the result
from \cite{vG18a} that the encodings $\fT_{\rm B}$ and $\fT_{\rm HT}$ meet all those criteria.
Trivially, $\fT_{\rm B}$ and $\fT_{\rm HT}$ also meet Palamidessi's criterion that
parallel composition is translated homomorphically, and thus also the criterion on preservation of
distributability from \cite{PNG13}.

\sect{compositionality} focuses on the criterion of compositionality. Gorla's proposal involves a weaker form of
this requirement, exactly because encodings like $\fT_{\rm B}$ and $\fTHT$ do not satisfy
the default form of compositionality. However, we show that these encodings also satisfy a form of
compositionality due to \cite{vG12} that significantly strengthens the one from \cite{Gorla10a}.
Moreover, depending on how the definition of valid encodings between concrete languages generalises
to one between parametrised languages, one may even conclude that $\fT_{\rm B}$ and $\fT_{\rm HT}$
satisfy the default notion of compositionality.

\sect{operational correspondence} focuses on the criterion of operational correspondence.
In \cite{NestmannP00} two forms of this criterion were proposed, one for \emph{prompt} and one for
\emph{nonprompt} encodings. Gorla's form of operational correspondence \cite{Gorla10a} is the natural common weakening
of the forms from Nestmann \& Pierce \cite{NestmannP00}, and thus applies to prompt as well as
nonprompt encodings. As the encodings $\fT_{\rm B}$ and $\fT_{\rm HT}$ are nonprompt, they certainly
do not meet the prompt form of operational correspondence from \cite{NestmannP00}. In \cite{vG18a}
it was shown that they not only satisfy the form of \cite{Gorla10a}, but even the nonprompt form
from \cite{NestmannP00}.

Gorla's form of operational correspondence, as well as the nonprompt form of \cite{NestmannP00},
weakens the prompt form in two ways. In \cite{vG18a} a natural intermediate form was contemplated
that weakens the prompt form in only one of these ways, and the open question was raised whether 
$\fT_{\rm B}$ and $\fT_{\rm HT}$ satisfy this intermediate form of operational correspondence.
The present paper answers that question affirmatively.

Gorla's criterion of success sensitiveness is a more abstract form of \emph{barb sensitiveness}.
The original barbs were predicates telling whether a process could input or output data over a
certain channel. In \sect{barbs} we show that whereas $\fT_{\rm B}$ is barb sensitive, $\fT_{\rm HT}$
is not. The encoding $\fT_{\rm HT}$ becomes barb sensitive if we use a weaker form of barb,
abstracting from the difference between input and output. This, however, is against the spirit of the
asynchronous $\pi$-calculus, where instead one abstracts from input barbs altogether.
Gorla's  criterion of success sensitiveness thus appears to be an improvement over barb sensitiveness.

\sect{validity upto} evaluates $\fT_{\rm B}$ and $\fT_{\rm HT}$ under the original validity criterion
of Boudol \cite{Bo85}, as generalised in \cite{vG12}; we call a compositional encoding $\fT$ \emph{valid up to}
a semantic equivalence ${\sim}$ iff $\trans{P} \sim P$ for all source language expressions $P$.
We observe that the encodings $\fT_{\rm B}$ and $\fT_{\rm HT}$ are not valid under equivalences that
match transition labels, such as early weak bisimilarity, nor under asynchronous weak bisimilarity.
Then we show that $\fT_{\rm B}$, but not $\fTHT$, is valid under weak barbed bisimilarity. This is
our main result. Finally, we introduce a new equivalence under which $\fTHT$ is valid: a version of
weak barbed bisimilarity that drops the distinction between input and output barbs.

\sect{versus} starts with the result that $\fT_{\rm B}$ and $\fT_{\rm HT}$ are both valid under a
version of weak barbed bisimilarity where an abstract success predicate takes over the role of barbs.
That statement turns out to be equivalent to the statement that these encodings are success sensitive
and satisfy a form of operational correspondence that is stronger then Gorla's. One can also
incorporate Gorla's requirement of divergence reflection into the definition of form of barbed bisimilarity.
Finally, we remark that $\fT_{\rm B}$ and $\fT_{\rm HT}$ remain valid when upgrading weak to branching bisimilarity.

\sect{full abstraction} applies a theorem from \cite{vG18e} to infer from the validity of
$\fT_{\rm B}$ and $\fT_{\rm HT}$ up to a form of weak barbed bisimilarity, that these encodings are
also fully abstract, when taking as source language equivalence weak barbed congruence, and as
target language equivalence the congruence closure of that form of weak barbed bisimilarity for the
image of the source language within the target language.

\section{Encoding Synchrony into Asynchrony}\label{sec:pi2api}

\noindent
Consider the $\pi$-calculus as presented by Milner in \cite{Mi92}, i.e., the one of
Sangiorgi and Walker \cite{SW01book} without matching, $\tau$-prefixing and choice.

Given a set of \emph{names} $\N$, the set $\mathcal{P}_\pi$ of \emph{processes} or \emph{terms} $P$
of the calculus is given by\vspace{-2ex}  $$P ::= \textbf{0}  ~~\mid~~ \bar xz.P ~~\mid~~ x(y).P ~~\mid~~ P|Q ~~\mid~~ (\nu y)P ~~\mid~~ !P$$
with $x,y,z,u,v,w$ ranging over $\N$.

$\bm{0}$ denotes the empty process.
$\bar{x}z$ stands for an output guard that sends the name $z$ along the channel $x$.
$x(y)$ denotes an input guard that waits for a name to be transmitted along the channel named $x$.
Upon receipt, the name is substituted for $y$ in the subsequent process.
$P|Q$ ($P, Q \in \pims$) denotes a parallel composition between $P$ and $Q$.
$!P$ is the replication construct and $(y)P$ restricts the scope of name $y$ to $P$.

\begin{definition}\rm\label{df:structural congruence}
An occurrence of a name $y$ in $\pi$-calculus process $P\in\T_\pi$ is \emph{bound} if it lies
within a subexpression $x(y).Q$ or $(\nu y)Q$ of $P$; otherwise it is \emph{free}.
Let $\n(P)$ be the set of names occurring in $P\in \T_\pi$,
and $\fn(P)$ (resp.\ $\bn(P)$) be the set of names occurring free (resp.\ bound) in $P$.

\emph{Structural congruence}, $\equred$, is the smallest congruence relation on processes satisfying\vspace{-2ex}
\[
\begin{array}[b]{@{}l@{\;\;}r@{~\!\equred\!~}l@{\hspace{10pt}}r@{~\!\equred\!~}l@{\;\;}r@{}}
\scriptstyle (1)& P | (Q | R) & (P | Q) | R &
(\nu y) \textbf{0} & \textbf{0} & \scriptstyle (5)\\

\scriptstyle (2)& P | Q & Q | P &
(\nu y)(\nu u)P & (\nu u)(\nu y) P & \scriptstyle (6)\\

\scriptstyle (3)& P | \textbf{0} & P &
(\nu w) (P | Q) & P | (\nu w)Q  & \scriptstyle (7)\\

& \multicolumn{2}{c}{} &
(\nu y)P & (\nu w)P\subs{w}{y} & \scriptstyle (8)\\

\scriptstyle (4)& !P & P | !P &
x(y).P & x(w).P\subs{w}{y}\;.  & \scriptstyle (9)\\
\end{array}
\]
Here $w\notin\n(P)$, and $P\subs{w}{y}$ denotes the process
obtained by replacing each free occurrence of $y$ in $P$ by $w$.
Rules (8) and (9) constitute \emph{$\alpha$-conversion} (renaming of bound names).
In case $w \in\n(P)$, $P\subs{w}{z}$ denotes $Q\subs{w}{z}$ for some process $Q$ obtained from $P$
by means of $\alpha$-conversion, such that $z$ does not occur within subexpressions $x(w).Q'$ or $(\nu w)Q'$ of $Q$.
\end{definition}

\begin{definition}\rm\label{df:reduction}
The \emph{reduction relation}, ${\longmapsto}\subseteq \T_\pi \times \T_\pi$, is generated by the
following rules:\vspace{-2ex}
\[\begin{array}{@{}cc@{}}
\displaystyle\frac{~}{\bar xz.P | x(y).Q \longmapsto P|Q\subs{z}{y}} &
\displaystyle\frac{P \longmapsto P'}{P|Q \longmapsto P'|Q} \\[3ex]
\displaystyle\frac{P \longmapsto P'}{(\nu y)P \longmapsto (\nu y)P'} &
\displaystyle\frac{Q \equred P \quad\! P \longmapsto P' \quad\! P' \equred Q'}{Q \longmapsto Q'}\;.
\end{array}\]
\end{definition}

\noindent
The asynchronous $\pi$-calculus, as introduced by Honda \& Tokoro in \cite{HT91} and by Boudol in
\cite{Bo92}, is the sublanguage $\rm a\pi$ of the fragment $\pi$ of the $\pi$-calculus presented above where all
subexpressions $\bar x z.P$ have the form $\bar x z.\textbf{0}$, and are written $\bar x z$. 
A characteristic of synchronous communication, as used in $\pims$,
is that sending a message synchronises with receiving it, so that
a process sending a message can only proceed after another party has
received it. In the asynchronous $\pi$-calculus this feature is dropped, as
it is not possible to specify any behaviour scheduled after a send action.

Boudol \cite{Bo92} defines an encoding $\fT_{\rm B}$ from $\pi$ to $\rm a\pi$ inductively as follows:
\[
\renewcommand{\sI}{\fT^{}_{\rm B}}
\begin{array}{r@{~~:=~~}l@{\qquad}l}
\sI(\nil)         & \nil \\
\sI(\bar{x}z.P)   & (u)(\bar{x}u | u(v).(\bar{v}z | \sI(P))) &\mbox{\small with $u,v\mathbin{\notin}\fn(P){\cup}\{x,z\}$} \\
\sI(x(y).P)       & x(u).(v)(\bar{u}v | v(y).\sI(P)) &\mbox{\small with $u,v\mathbin{\notin}\fn(P){\cup}\{x\}$} \\
\sI(P | Q)        & (\sI(P) | \sI(Q)) \\
\sI(!P)           & \ ! \sI(P) \\
\sI((x) P)        & (x) \sI(P)
\end{array}
\]
always choosing $u\mathbin{\neq} v$.
To sketch the underlying idea, suppose a $\pims$-process is able to perform a
communication, for example $\bar{x}z.P|x(y\hspace{-.6pt}).Q$.
In the asynchronous variant of the \piCal, there is no continuation process after an output operation.
Hence, a translation into the asynchronous {\piCal} has to reflect the communication on channel $x$ as well as the guarding role of $\bar x z$ for $P$ in the synchronous \piCal.
The idea of Boudol's encoding is to assign a guard to $P$ such that this process must receive an
{\em acknowledgement message} confirming the receipt of $z$.%
\footnote{As observed by a referee, the encodings $\fT_{\rm B}$ and $\fT_{\rm HT}$ do not satisfy this constraint:
the continuation process $P$ can proceed before $z$ is received. This issue could be alleviated by
enriching the protocol with another communication from $Q'$ to $P'$.}
We write the sender as
$P' \mathbin= (\bar{x}z|u(v).P)$
where $u,v \mathbin{\not\in} \fn(P)$.
Symmetrically, the receiver must send the acknowledgement, \ie
$Q' = x(y).(\bar{u}v|Q)$.
Unfortunately, this simple transformation is not applicable in every
case, because the protocol does not protect the channel $u$.
$u$ should be known to sender and receiver only, otherwise the
communication may be interrupted by the environment.
Therefore, we restrict the scope of $u$, and start by sending this
private channel to the receiver. The actual message $z$ is now sent in
a second stage, over a channel $v$, which is also made into a
private channel between the two processes.
The crucial observation is that in
$(u)(\bar{x}u|u(v).P^*)$,
the subprocess $P^*=\bar{v}z | P$ may only continue after
$\bar{x}u$ was accepted by some receiver, and this receiver has
  acknowledged this by transmitting another channel name $v$ on the
  private channel $u$.

The encoding $\fT_{\rm HT}$ of Honda \& Tokoro \cite{HT91} differs only in the
clauses for the input and output prefix:
\[
\renewcommand{\sI}{\fT^{}_{\rm HT}}
\begin{array}{r@{~~:=~~}l@{\qquad}l}
\sI(\bar{x}z.P)   & x(u).(\bar{u}z | \sI(P) ) &\mbox{\small$u\mathbin{\notin}\fn(P){\cup}\{x,z\}$} \\
\sI(x(y).P)       & (u)(\bar{x}u | u(y).\sI(P)) &\mbox{\small$u\mathbin{\notin}\fn(P){\cup}\{x\}$.} \\
\end{array}
\]
Unlike Boudol's translation, communication
takes place directly after synchronising along the private channel $u$.
The synchronisation occurs in the reverse direction, because sending
and receiving messages alternate, meaning that the sending process 
$\bar{x}z.Q$ is translated into a process that receives a message 
on channel $x$ and the receiving process $x(y).R$ is translated into a 
process passing a message on $x$.

\section{Valid Encodings According to Gorla}\label{sec:Gorla-validity}

\noindent
In \cite{Gorla10a} a \emph{process calculus} is given as a triple
$\fL\mathbin=(\mathcal{P},\longmapsto,\asymp)$, where
\begin{itemize}
\item $\mathcal{P}$ is the set of language terms (called
  \emph{processes}), built up from $k$-ary composition operators $\op$,\pagebreak[3]
\item $\longmapsto$ is a binary \emph{reduction} relation between
  processes,
\item $\asymp$ is a semantic equivalence on processes.
\end{itemize}
The operators themselves may be constructed from a set $\N$
of names. In the $\pi$-calculus, for instance, there is a unary
operator $\bar x y.\_$ for each pair of names
\mbox{$x,y\mathbin\in\N$}.
This way names occur in processes; the occurrences of names in
processes are distinguished in \emph{free} and \emph{bound} ones;
$\fn(\vec P)$ denotes the set of names occurring free in the $k$-tuple
of processes $\vec P=(P_1,\dots,P_k)\mathbin\in\mathcal{P}^k$.
A \emph{renaming} is a function $\sigma:\N\rightarrow\N$;
it extends componentwise to $k$-tuples of names.
If $P\mathbin\in\mathcal{P}$ and $\sigma$ is a renaming, then $P\sigma$
denotes the term $P$ in which each free occurrence of a name $x$ is
replaced by $\sigma(x)$, while renaming bound names to avoid name capture.

A $k$-ary $\fL$-\emph{context} $C[\__1,\dots,\__k]$ is a term
build by the composition operators of $\fL$ from \emph{holes}
$\__1,\dots,\__k$\,; the context is called \emph{univariate}
if each of these holes occurs exactly once in it.
If $C[\__1,\dots,\__k]$ is a $k$-ary $\fL$-\emph{context} and
$P_1,\dots,P_k \in \mathcal{P}$ then $C[P_1,\dots,P_k]$ denotes the
result of substituting $P_i$ for $\__i$ for each $i\mathbin=1,\dots,k$.

Let $\Longmapsto$ denote the reflexive-transitive closure of $\longmapsto$.
One writes $P\longmapsto^\omega$ if $P$ \emph{diverges}, that is, if
there are $P_i$ for $i\in\mbox{\bbb N}$ such that $P\mathbin= P_0$ and
$P_i\longmapsto P_{i+1}$ for all $i\mathbin\in\mbox{\bbb N}$.
Finally, write $P\longmapsto$ if $P\longmapsto Q$ for some term $Q$.

For the purpose of comparing the expressiveness of languages,
a constant $\surd$ is added to each of them~\cite{Gorla10a}.
A term $P$ in the upgraded language is said to \emph{report success},
written $P{\downarrow}$, if it has a \emph{top-level unguarded} occurrence
of $\surd$.\footnote{Gorla defines the latter concept only for languages
  that are equipped with a notion of \emph{structural congruence} $\equiv$ as well as a parallel
  composition $|$. In that case
  $P$ has a top-level unguarded occurrence of $\surd$ iff $P\equiv Q|\surd$, for some $Q$~\cite{Gorla10a}.
  Specialised to the $\pi$-calculus, a \emph{(top-level) unguarded} occurrence is one that not lies
  strictly within a subterm $\alpha.Q$, where $\alpha$ is $\tau$, $\bar xy$ or $x(z)$.
  For De Simone languages \cite{dS85}, even when not equipped with $\equiv$ and $|$, a suitable
  notion of an unguarded occurrence is defined in \cite{Va93copy}.}
Write $P{\Downarrow}$ if $P\Longmapsto P'$ for a process $P'$ with $P'{\downarrow}$.

\begin{definition}[{\cite{Gorla10a}}]\rm\label{df:encoding}
An \emph{encoding} of $\fL_{\rm s}\mathbin=(\mathcal{P}_{\rm s},\longmapsto_{\rm s},\asymp_{\rm s})$
into $\fL_{\rm t}\mathbin=(\mathcal{P}_{\rm t},\longmapsto_{\rm t},\asymp_{\rm t})$ is a pair
$(\fT,\varphi_{\subtrans})$ where
$\fT:\mathcal{P}_{\rm s}\rightarrow\mathcal{P}_{\rm t}$ is called \emph{translation}
and $\varphi_{\subtrans}:\N\rightarrow\N^k$ for some $k\mathbin\in\mbox{\bbb N}$
is called \emph{renaming policy} and is such that for $u\neq v$ the
$k$-tuples $\varphi_{\subtrans}(u)$ and $\varphi_{\subtrans}(v)$ have no name in common.
\end{definition}
\noindent
The terms of the source and target languages $\fL_{\rm s}$ and
$\fL_{\rm t}$ are often called $S$ and $\pT$, respectively.

\begin{definition}[{\cite{Gorla10a}}]\rm\label{df:valid}
An encoding is \emph{valid} if it satisfies the following five criteria.
\begin{enumerate}
\item \emph{Compositionality:} for every $k$-ary operator $\op$ of
  $\fL_{\rm s}$ and for every set of names $N\subseteq\N$,
  there exists a univariate $k$-ary context $C_\op^N[\__1,\dots,\__k]$ such that
  $$\trans{\op(S_1,\ldots,S_k)} = C_\op^N[\trans{S_1},\ldots,\trans{S_k}]$$
  for all $S_1,\ldots,S_k\in\mathcal{P}_{\rm s}$ with $\fn(S_1,\dots,S_n)=N$.
\item \emph{Name invariance:} for every $S\mathbin\in\mathcal{P}_{\rm s}$ and
  $\sigma:\N\rightarrow\N$
  \[\begin{array}{ccc@{\quad}l}
    \trans{S\sigma} & = & \trans{S}\sigma' & \mbox{if $\sigma$ is injective}\\
    \trans{S\sigma} & \asymp_{\rm t} & \trans{S}\sigma' & \mbox{otherwise}\\
    \end{array}\]
  with $\sigma'$ such that
  $\varphi_{\subtrans}(\sigma(a))\mathbin=\sigma'(\varphi_{\subtrans}(a))$
  for all $a\mathbin\in\N\!$.
\item \emph{Operational correspondence:}\\
  \begin{tabular}{@{}ll}
   \emph{Completeness} & if $S \Longmapsto_{\rm s} S'$ then $\trans{S}\Longmapsto_{\rm t}\asymp_{\rm t} \trans{S'}$\\
   \emph{Soundness} & if $\trans{S} \Longmapsto_{\rm t} \pT$ then $\exists S'\!: S\Longmapsto_{\rm s} S'$ and $\pT\Longmapsto_{\rm t}\asymp_{\rm t} \trans{S'}$.
  \end{tabular}
\item \emph{Divergence reflection:} 
  if $\trans{S} \longmapsto_{\rm t}^\omega$ then $S \longmapsto_{\rm s}^\omega$.
\item \emph{Success sensitiveness:}
  $S {\Downarrow}$ iff $\trans{S}{\Downarrow}$.\\
  For this purpose $\trans{\cdot}$ is extended to deal with the added constant $\surd$ by taking $\trans{\surd}=\surd$.
\end{enumerate}
\end{definition}
The above treatment of success sensitiveness differs slightly from the one of Gorla \cite{Gorla10a}.
Gorla requires $\surd$ to be a constant of any two languages whose expressiveness is
compared. Strictly speaking, this does not allow his framework to be applied to the encodings
$\fT_{\rm B}$ and $\fTHT$, as these deal with languages not featuring $\surd$.  Here, following
\cite{vG18a}, we simply allow $\surd$ to be added, which is in line with the way Gorla's framework
has been used \cite{Gorla10b,LPSS10,PSN11,PN12,PNG13,GW14,EPTCS160.4,EPTCS189.9,GWL16}.
A consequence of this decision is that one has to specify how $\surd$ is translated---see the last
sentence of Definition~\ref{df:valid}---as the addition of $\surd$ to both languages happens after a
translation is proposed. This differs from \cite{Gorla10a}, where it is explicitly allowed to take
$\trans{\surd} \neq \surd$.

In \cite{vG18a} it is established that the encodings $\fT_{\rm B}$ and $\fTHT$, reviewed in \sect{pi2api},
are valid according to Gorla \cite{Gorla10a}; that is, both encodings enjoy the five correctness criteria above.
Here, the semantic equivalences $\asymp_{\rm s}$ and $\asymp_{\rm t}$ that Gorla assumes to exist on the source
and target languages, but were not specified in \sect{pi2api}, can chosen to be the identity, thus
obtaining the strongest possible instantiation of Gorla's criteria.
Moreover, the renaming policy required by Gorla as part of an encoding can be chosen to be the identity,
taking $k\mathbin=1$ in \df{encoding}.
Trivially, $\fT_{\rm B}$ and $\fT_{\rm HT}$ also meet Palamidessi's criterion that
parallel composition is translated homomorphically, and thus also the criterion on preservation of
distributability from \cite{PNG13}.

\section{Compositionality}\label{sec:compositionality}

Compositionality demands that for every $k$-ary operator $\op$ of the source language
there is a $k$-ary context $C_\op[\__1,\dots,\__k]$ in the target such that
  $$\trans{\op(S_1,\ldots,S_k)} = C_\op[\trans{S_1},\ldots,\trans{S_k}]$$
for all $S_1,\ldots,S_k\in\mathcal{P}_{\rm s}$ \cite{Bo85}. Gorla \cite{Gorla10a} strengthens this
requirement by the additional requirement that the context $C_\op$ should be univariate; at the same
time he weakens the requirement by allowing the required context $C_\op$ to 
depend on the set of names $N$ that occur free in the arguments $S_1,\ldots,S_k$.
The application to the encodings $\fT_{\rm B}$ and $\fT_{\rm HT}$ shows that we cannot simply
strengthen the criterion of compositionality by dropping the dependence on $N$. For then the present
encodings would fail to be compositional. Namely, the context $C_{\bar xz.\_}$ depends on the choice of two
names $u$ and $v$, and the choice of these names depends on $N=\fn(S_1)$, where $S_1$ is the only
argument of output prefixing. That the choice of  $C_{\bar xz.\_}$ also depends on $x$ and $z$ is unproblematic.

In \cite{vG18b} a form of compositionality is proposed where $C_\op$ does not depend on $N$,
but the main requirement is weakened to
  $$\trans{\op(S_1,\ldots,S_k)} \stackrel\alpha= C_\op[\trans{S_1},\ldots,\trans{S_k}].$$
Here $\stackrel\alpha=$ denotes equivalence up to \emph{$\alpha$-conversion}, renaming of bound
names and variables, for the $\pi$-calculus corresponding with rules $\scriptstyle{(8)}$ and $\scriptstyle{(9)}$ of
structural congruence. This suffices to rescue the current encodings, for up to $\alpha$-conversion
$u$ and $v$ can always be chosen outside $N$. It is an open question whether
there are examples of intuitively valid encodings that essentially need the dependence of $N$ allowed
by \cite{Gorla10a}, i.e., where $C_\op^{N_{\rm s}}$ and $C_\op^{N_{\rm t}}$ differ by more than $\alpha$-conversion.

Another method of dealing with the fresh names $u$ and $v$ that are used in the encodings $\fT_{\rm B}$ and $\fT_{\rm HT}$,
proposed in \cite{vG12}, is to equip the target language with two fresh names that do not occur
in the set of names available for the source language. Making the dependence on the choice of set
$\N$ of names explicit, this method calls $\pims$ expressible into $\pima$ if for each $\N$ there
exists an $\N'$ such that there is a valid encoding of $\pi(\N)$ into $\pima(\N')$.
By this definition, the encodings $\fT_{\rm B}$ and $\fT_{\rm HT}$ even satisfy the default
definition of compositionality, and its strengthening obtained by insisting the contexts $C_\op$ to be univariate.

\section{Operational Correspondence}\label{sec:operational correspondence}

Operational completeness (one half of operational correspondence) was formulated by Nestmann \&
Pierce \cite{NestmannP00} as
\begin{center}
\mbox{}\hfill
$S \longmapsto_{\rm s} S'$ then $\trans{S}\Longmapsto_{\rm t} \trans{S'}$.\hfill$(\mathfrak{C})$
\end{center}
It makes no difference whether the antecedent of this implication is rephrased as $S \Longmapsto_{\rm s} S'$,
as done by Gorla. Gorla moreover weakens the criterion to
\begin{center}
\mbox{}\hfill
$S \Longmapsto_{\rm s} S'$ then $\trans{S}\Longmapsto_{\rm t}\asymp_{\rm t} \trans{S'}$.\hfill$(\mathfrak{C}')$
\end{center}
This makes the criterion applicable to many more encodings. In the case of $\fT_{\rm B}$ and $\fT_{\rm HT}$,
\cite{vG18a} shows that these encodings not only satisfy $(\mathfrak{C}')$, but even $(\mathfrak{C})$.

Operational soundness also stems from Nestmann \& Pierce \cite{NestmannP00}, who proposed two forms of it:
\begin{center}
\mbox{}\hfill
if $\trans{S} \longmapsto_{\rm t} \pT$ then $\exists S'\!:$ $S\longmapsto_{\rm s} S'$ and $\pT\asymp_{\rm t} \trans{S'}$.~~~\hfill$(\mathfrak{I})$\\
\mbox{}\hfill
if $\trans{S} \Longmapsto_{\rm t} \pT$ then $\exists S'\!:$ $S\Longmapsto_{\rm s} S'$ and $\pT\Longmapsto_{\rm t}\trans{S'}$.\hfill$(\mathfrak{S})$
\end{center}
The former is meant for ``\emph{prompt} encodings, i.e., those where initial steps of literal
translations are committing'' \cite{NestmannP00}, whereas the latter apply to ``nonprompt encodings'',
that ``allow administrative (or book-keeping) steps to precede a committing step''.
The version of Gorla is the common weakening of $(\mathfrak{I})$ and $(\mathfrak{S})$:
\begin{center}
\mbox{~~\,}\hfill
if $\trans{S} \mathbin{\Longmapsto_{\rm t}} \pT$ then $\exists S'\!:\!S\mathbin{\Longmapsto_{\rm s}} S'$ and $\pT\Longmapsto_{\rm t}\asymp_{\rm t}\trans{S'}$.\hfill$(\mathfrak{G})$
\end{center}
It thus applies to prompt as well as nonprompt encodings.
The encodings $\fT_{\rm B}$ and $\fT_{\rm HT}$ are nonprompt, and accordingly do not meet $(\mathfrak{I})$.
In \cite{vG18a} it was shown that they not only satisfy $(\mathfrak{G})$, but even $(\mathfrak{S})$.

An interesting intermediate form between $\mathfrak{J}$ and $\mathfrak{G}$ is
\begin{center}
\mbox{}\hfill
if $\trans{S} \Longmapsto_{\rm t} \pT$ then $\exists S'\!:$ $S\Longmapsto_{\rm s} S'$ and $\pT\asymp_{\rm t} \trans{S'}$.\hfill~~$(\mathfrak{W})$
\end{center}
Whereas $(\mathfrak{G})$ weakens $(\mathfrak{J})$ in two ways, $(\mathfrak{W})$ weakens
$(\mathfrak{J})$ in only one of these ways.
Moreover, $(\mathfrak{W})$ is the natural counterpart of $(\mathfrak{C}')$.
In \cite{vG18a} the open question was raised whether $\fT_{\rm B}$ and $\fT_{\rm HT}$ satisfy
$(\mathfrak{W})$, for a reasonable choice of $\asymp_{\rm t}$.
(An unreasonable choice, such as the universal relation, tells us nothing.)
As pointed out in \cite{vG18a}, they do not when taking $\asymp_{\rm t}$ to be the identity relation, or
structural congruence.

The present paper answers this question affirmatively, taking  $\asymp_{\rm t}$
to be weak barbed bisimilarity. A proof will follow in \sect{versus}.

\section{Barb Sensitiveness}\label{sec:barbs}

Gorla's success predicate is one of the possible ways to provide source and target languages with
a set of \emph{barbs} $\Omega$, each being a unary predicate on processes. For $\omega \in \Omega$,
write $P{\downarrow_\omega}$ if process $P$ has the barb $\omega$, and 
$P{\Downarrow_\omega}$ if $P\Longmapsto P'$ for a process $P'$ with $P'{\downarrow_\omega}$.
In Gorla's case, $\Omega=\{\surd\}$, and $P$ has the barb $\surd$ iff $P$ has a top-level unguarded
occurrence of $\surd$.
The standard criterion of barb sensitiveness is then
$S{\Downarrow_\omega} \Leftrightarrow \trans{S}{\Downarrow_\omega}$ for all $\omega\in\Omega$.

A traditional choice of barb in the $\pi$-calculus is to take $\Omega=\{x,\bar x \mid x\mathbin\in\N\}$,
writing $P{\downarrow_x}$, resp.\ $P{\downarrow_{\bar x}}$, when $P$ has an unguarded occurrence of
a subterm $x(z).R$, resp.\ $\bar xy.R$, that lies not in the scope of a restriction operator $\nu(x)$ \cite{milner:poly,SW01book}.
This makes a barb a predicate that tells weather a process can read or write over a given channel.
Boudol's encoding keeps the original channel names of a sending or receiving process invariant.
Hence, a translated term does exhibit the same barbs as the source term.
\begin{lemma}\label{lem:barbs}\rm
Let $P \in \bbT_{\pims}$ and $a \in \{ x, \bar x \,|\, x \in \mathcal N \}$.
Then $P \scomm a$ iff $\fT_{\rm B}(P) \scomm a$.
\end{lemma}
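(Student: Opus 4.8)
The plan is to prove the statement by structural induction on $P$, strengthening the induction hypothesis so that it covers the claim for all barbs $a \in \{x,\bar x \mid x\in\mathcal N\}$ simultaneously. The key observation, already emphasised in the text, is that $\fT_{\rm B}$ keeps the channel name of each input and output prefix invariant and does not introduce new top-level prefixes whose channel lies outside $\fn(P)$ except behind the restrictions $(u)$ and on the fresh private names $u,v$, which by the side conditions are bound and distinct from everything relevant. So at each inductive step I would compare the unguarded, unrestricted prefixes of $P$ with those of $\fT_{\rm B}(P)$ and check they carry the same channel names.

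First I would handle the base case $P=\nil$: neither $\nil$ nor $\fT_{\rm B}(\nil)=\nil$ has any barb, so the equivalence holds vacuously. Then I would do the prefix cases. For $P=\bar xz.Q$ we have $P\scomm{\bar x}$ and $P\not\scomm{b}$ for every other $b$; on the other side $\fT_{\rm B}(P)=(u)(\bar xu \mid u(v).(\bar vz\mid\fT_{\rm B}(Q)))$, whose only top-level unguarded prefix not under a restriction is $\bar xu$ (the prefixes $u(v)$ and, deeper, $\bar vz$ lie under the binder $(u)$, and $u\ne x$ since $u\notin\{x,z\}$), so $\fT_{\rm B}(P)\scomm{\bar x}$ and nothing else — matching $P$. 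The case $P=x(y).Q$ is symmetric, with $\fT_{\rm B}(P)=x(u).\big((v)(\bar uv\mid v(y).\fT_{\rm B}(Q))\big)$ exhibiting exactly the barb $x$. For the compositional cases I would use that $\fT_{\rm B}$ is homomorphic: $\fT_{\rm B}(Q\mid R)=\fT_{\rm B}(Q)\mid\fT_{\rm B}(R)$, and $Q\mid R \scomm a$ iff $Q\scomm a$ or $R\scomm a$ (same rule in source and target), so the result follows from the induction hypotheses for $Q$ and $R$; similarly $\fT_{\rm B}(!Q)=\,!\fT_{\rm B}(Q)$ and $!Q\scomm a$ iff $Q\scomm a$ (using the structural congruence unfolding $!Q\equiv Q\mid{!Q}$, which preserves barbs). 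For $P=(x)Q$ we have $\fT_{\rm B}((x)Q)=(x)\fT_{\rm B}(Q)$ and, by definition of barbs, $(x)Q\scomm a$ iff $Q\scomm a$ and $a\notin\{x,\bar x\}$; the induction hypothesis for $Q$ then gives the claim, since restricting the same name $x$ on both sides removes the same barbs.

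The main obstacle — really the only subtlety — is the restriction case combined with $\alpha$-conversion: one must make sure that the fresh names $u,v$ introduced in the prefix clauses, and the bound names in the $(x)Q$ and $x(y).Q$ cases, never accidentally coincide with the channel name of the barb $a$ under consideration, and that the notion of barb is stable under the $\alpha$-conversions implicit in $\equred$ and in the side conditions defining $\fT_{\rm B}$. I would dispatch this by noting that barbs are $\equred$-invariant (in particular $\alpha$-invariant), that the side conditions force $u,v\notin\fn(Q)\cup\{x,z\}$ so $\bar xu$ really does contribute the barb $\bar x$ and not some spurious barb, and that a barb on a name that happens to be captured by an inner binder can always be renamed away before the comparison. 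With that lemma in place each inductive step is a direct inspection of the top-level structure, so no further calculation is needed.
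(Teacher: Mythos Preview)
Your proposal is correct and follows essentially the same approach as the paper's proof: a structural induction on $P$, checking in each case that $P$ and $\fT_{\rm B}(P)$ exhibit exactly the same strong barbs. The paper's version is more terse, but your added remarks on $\alpha$-conversion and the freshness of $u,v$ merely spell out what the paper leaves implicit.
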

\begin{proof}
\renewcommand{\sI}{\fT_{\rm B}}
With structural induction on $P$.
\leftmargini 20pt
\begin{itemize}
\item $\bm{0}$ and $\sI(\bm{0})$ have the same strong barbs, namely none.%
\item $\bar xz.P$ and $\sI(\bar xz.P)$ both have only the strong barb $\bar x$.%
\item $x(y).P$ and $\sI(x(y).P)$ both have only strong barb $x$.%
\item The strong barbs of $P|Q$ are the union of the ones of $P$ and $Q$.
  Using this, the case $P|Q$ follows by induction.%
\item The strong barbs of $!P$ are the ones of $P$.
  Using this, the case $!P$ follows by induction.
\item The strong barbs of $(x)P$ are ones of $P$ except $x$ and $\bar x$.
  Using this, the case $(x)P$ follows by induction.
\qed
\end{itemize}
\end{proof}
It follows that $\fT_{\rm B}$ meets the validity criterion of barb sensitiveness.

The philosophy behind the asynchronous $\pi$-calculus entails that input actions $x(z)$ are not
directly observable (while output actions can be observed by means of a matching input of
the observer). This leads to semantic identifications like $\nil = x(y).\bar xy$, for in
both cases the environment may observe $\bar x z$ only if it supplied $\bar x z$ itself first.
Yet, these processes differ on their input barbs ($\downarrow_x$). For this reason, in
$\rm a\pi$ normally only output barbs ${\downarrow_{\bar x}}$ are considered \cite{SW01book}.
Boudol's encoding satisfies the
criterion of output barb sensitiveness (and in fact also input barb sensitiveness).
However, the encoding of Honda \& Tokoro does not, as it swaps input and output barbs.
As such, it is an excellent example of the benefit of the external barb $\surd$ employed in Gorla's
notion of success sensitiveness.

To obtain a weaker form of barb sensitiveness such that also $\fT_{\rm HT}$ becomes barb sensitive,
we introduce \emph{channel barbs} $x\in\N$. A process is said to have the channel barb $x$ iff it
either has the barb $\bar x$ or $x$. We write $P \!\ascomm x$ when $P$ has the channel barb $x$,
and $P\acomm x$ when a $P'$ exists with $P \Longmapsto P'$ and $P' \!\ascomm x$.

\begin{definition}\rm
  An encoding $\fT$ is \emph{channel barb sensitive} if
  $S{\Downarrow_\omega} \Leftrightarrow \trans{S}{\Downarrow_\omega}$ for all $\omega\in\Omega$.
\end{definition}
This is a weaker criterion than barb sensitiveness, so $\fT_{\rm B}$ is surely channel barb sensitive.
It is easy to see that Honda \& Tokoro's encoding $\fT_{\rm HT}$, although not barb sensitive, is
channel barb sensitive.

\section{Validity up to a Semantic Equivalence}\label{sec:validity upto}

This section deals with the original validity criterion from Boudol \cite{Bo85}, as generalised in \cite{vG12}.
Following \cite{vG12} we call a compositional encoding $\fT$ \emph{valid up to} a semantic equivalence
${\sim} \subseteq \T \times \T$, where $\T\supseteq \T_{\rm s} \cup \T_{\rm t}$, iff $\trans{P} \sim P$ for all $P\in\T_{\rm s}$.
A given encoding may be valid up to a coarse equivalence, and invalid up to a finer one.
The equivalence for which it is valid is then a measure of the quality of the encoding.

Below, we will evaluate the encodings $\fT_{\rm B}$ and $\fT_{\rm HT}$ under a number of semantic
equivalences found in the literature. Since these encodings translate a single transition in the
source language by a small protocol involving two or three transitions in the target language, they
surely will not be valid under \emph{strong} equivalences, demanding step-for-step matching of
source transitions by target transitions. Hence we only look at \emph{weak} equivalences.

First we consider equivalences that match transition labels, such as early weak bisimilarity.
The encodings $\fT_{\rm B}$ and $\fT_{\rm HT}$ are not valid under such equivalences.
Then we show that Boudol's encoding $\fT_{\rm B}$ is valid under weak barbed bisimilarity, and thus certainly under
its asynchronous version; however, it is not valid under asynchronous weak bisimilarity.
The encoding $\fTHT$ of Honda \& Tokoro is not valid under any of these equivalences,
but we introduce a new equivalence under which it is valid: a version of weak barbed bisimilarity
that drops the distinction between input and output barbs.

\subsection{A Labelled Transition Semantics of \texorpdfstring{$\pi$}{the pi-calculus}}\label{sec:processcalculi}

We first present a labelled transition semantics of the (a)synchronous $\pi$-calculus,
to facilitate the definition of semantic equivalences on these languages.
Its labels are drawn from
a set of actions $\Act := \{ \bar x y, x(y), \bar x(y) \,|\, x, y \in \mathcal N \} \cup \{ \tau \}$.
We define free and bound names on transition labels:\vspace{-4ex}

\begin{align*}
\fn(\tau)       &= \emptyset &\bn(\tau)       &= \emptyset \\[-3pt]
\fn(\bar{x}z)   &= \{x,z\}   &\bn(\bar{x}z)   &= \emptyset \\[-3pt]
\fn(x(y))       &= \{x\}     &\bn(x(y))       &= \{y\} \\[-3pt]
\fn(\bar{x}(y)) &= \{x\}    &\bn(\bar{x}(y)) &= \{y\}\;.
\end{align*}
For $\alpha\in\Act$ we define $\n(\alpha) := \bn(\alpha) \cup \fn(\alpha)$.

\begin{definition}\rm\label{df:syncltssemantics}
The {\em labelled transition relation of $\pims$} is the
smallest relation $\mathord{\longrightarrow} \subseteq \bbT_{\pims} \times \Act \times \bbT_{\pims}$,
satisfying the rules of \tab{pisyncsemantics}.
\end{definition}

{\renewcommand*\arraystretch{3}%
\newcommand{\inference}[3][]{{\scriptsize #1}$\displaystyle\frac{#2}{#3}$}
\begin{table}[t]
\centering
\vspace{-2ex}%
\begin{center}
\begin{tabular}{@{}cc@{}}
\inference[(OUTPUT-ACT)~]
{~}
{\bar{x}z.P\stackrel{\bar{x}z}{\longrightarrow}P}
&
\inference[(INPUT-ACT)~]
{w\not\in\fn((y)P)}
{x(y).P\stackrel{x(w)}{\longrightarrow}P\subs{w}{y}}
\\

\inference[(PAR)~]
{P \stackrel{\alpha}{\longrightarrow} P' \qquad \bn(a) \cap \fn(Q) = \emptyset}
{P|Q \stackrel{\alpha}{\longrightarrow} P'|Q}
&

\inference[(COM)~]
{P\stackrel{\bar{x}z}{\longrightarrow}P' \qquad Q\stackrel{x(y)}{\longrightarrow} Q'}
{P|Q \stackrel{\tau}{\longrightarrow}P'|Q'\subs{z}{y}}
\\
\inference[(CLOSE)~]
{P\stackrel{\bar{x}(w)}{\longrightarrow}P' \qquad Q\stackrel{x(w)}{\longrightarrow}Q'}
{P|Q\stackrel{\tau}{\longrightarrow}(w)(P'|Q')}
&
\inference[(RES)~]
{P \stackrel{\alpha}{\longrightarrow}P' \qquad y \not\in \n(a)}
{(y)P\stackrel{\alpha}{\longrightarrow}(y)P'}
\\
\inference[(OPEN)~]
{P \stackrel{\bar{x}y}{\longrightarrow} P' \qquad y \neq x \qquad w \not\in \fn((y)P')}
{(y)P \stackrel{\bar{x}(w)}{\longrightarrow}P'\subs{w}{y}}
&
\inference[(REP-ACT)~]
{P \stackrel{\alpha}{\longrightarrow} P'}
{!P \stackrel{\alpha}{\longrightarrow} P'| !P}
\\
\inference[(REP-COMM)~]
{P \stackrel{\bar x z}{\longrightarrow} P' \qquad P \stackrel{x(y)}{\longrightarrow} P''}
{!P \stackrel{\tau}{\longrightarrow} ( P' | P''\subs{z}{y} ) | !P}
&
\inference[(REP-CLOSE)~]
{P \stackrel{\bar x(w)}{\longrightarrow} P' \qquad P \stackrel{x(w)}{\longrightarrow} P''}
{!P \stackrel{\tau}{\longrightarrow} ( (w)( P'|P'' ) ) | !P}
\end{tabular}
\end{center}
\caption{SOS rules for the synchronous mini-\piCal.
PAR, COM and CLOSE also have symmetric rules.}
\label{tab:pisyncsemantics}\vspace{-2ex}
\end{table}}

\noindent
The $\tau$-transitions in the labelled transition semantics play the same role as the reductions in
the reduction semantics: they present actual behaviour of the represented system.
The transitions with a label different from $\tau$ merely represent potential behaviour:
a transition $x(y)$ for instance represents the potential of the system to receive a value on
channel $x$, but this potential will only be realised in the presence of a parallel component that
sends a value on channel $x$. Likewise, an output action $\bar x z$ or $\bar x (y)$ can be realised
only in communication with an input action $x(y)$.

The following results show (1) that the labelled transition relations are invariant under structural congruence ($\equred$), and (2)
that the closure under structural congruence of the
labelled transition relation restricted to $\tau$-steps coincides with
the reduction relation --- (2) stems from Milner~\cite{Mi92}.
\begin{lemma}[{\rm Harmony Lemma \cite[Lemma 1.4.15]{SW01book}}]\label{harmony}\rm
\begin{enumerate}
\item If $P \stackrel{\alpha}{\longrightarrow} P'$ and $P\equred Q$ then $\exists Q'. Q \stackrel{\alpha}{\longrightarrow} Q' \equred P'$
\item $P \longmapsto P'$ iff $\exists P''. P \stackrel{\tau}{\longrightarrow} P'' \equred P'$.
\end{enumerate}
\end{lemma}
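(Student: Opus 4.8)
The plan is to follow the classical proof of the Harmony Lemma (Sangiorgi--Walker \cite[Lemma~1.4.15]{SW01book}, with part~2 going back to Milner \cite{Mi92}), being careful with the name side conditions of the rules in \tab{pisyncsemantics}. For part~1 I would induct on the derivation of $P\equred Q$, viewing $\equred$ as generated by the axioms of \df{structural congruence}, by reflexivity, symmetry and transitivity, and by a congruence rule for each operator. Once the matching property is proved for \emph{both} directions of every axiom, symmetry and transitivity come for free (transitivity composes the two simulating steps), so the work is confined to the axioms and the congruence rules. For part~2 I would prove the two implications separately: the forward one by a short induction on the derivation of $P\longmapsto P'$, and the backward one by first reducing it, via closure of $\longmapsto$ under $\equred$, to the claim that $P\stackrel{\tau}{\longrightarrow}P''$ implies $P\longmapsto P''$, and then inducting on that $\tau$-derivation.

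For the congruence rules in part~1 the key auxiliary facts are that $\equred$ preserves the set of free names ($\fn(P)=\fn(Q)$) and is closed under substitution. Assuming the property for $P\equred Q$, I extend it to $P\mid R\equred Q\mid R$: a transition of $P\mid R$ is derived by (PAR), (COM) or (CLOSE); the premise involving $P$ is matched by a transition of $Q$ via the induction hypothesis, and $Q\mid R$ performs the corresponding step, the side conditions being unaffected since $R$ is fixed and $\fn(P)=\fn(Q)$. The extensions to $R\mid P$, to $(\nu y)P$ (rules (RES) and (OPEN), where one also checks the freshness condition on an extruded name using $\fn(P)=\fn(Q)$), to $\alpha.P$ (rules (OUTPUT-ACT), (INPUT-ACT), whose targets are literally $P$ or $P\subs wy$, matched by $Q$ or $Q\subs wy$) and to $!P$ (rules (REP-ACT), (REP-COMM), (REP-CLOSE)) are analogous. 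For the axioms themselves, the monoid laws for $\mid$, the restriction-exchange law, and $(\nu y)\nil\equred\nil$ amount to re-deriving a transition on the other side after permuting or deleting applications of (PAR)/(RES) and re-checking name side conditions; the replication law $!P\equred P\mid !P$ is handled left-to-right by matching (REP-ACT) with (PAR) and (REP-COMM)/(REP-CLOSE) with a (COM)/(CLOSE) between two copies followed by re-association, and right-to-left symmetrically, repeatedly invoking $!P\equred P\mid !P$ to reconcile how many unfoldings of $P$ appear on the two sides; and the $\alpha$-conversion laws $(8)$ and $(9)$ follow once one knows that $\stackrel{\alpha}{\longrightarrow}$ is invariant under renaming of bound names, itself a routine induction over transition derivations. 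The one axiom needing real care is scope extrusion, $(\nu w)(P\mid Q)\equred P\mid(\nu w)Q$ with $w\notin\fn(P)$: here I would analyse the last one or two rules of the derivation --- (RES) over (COM)/(CLOSE), or (OPEN) over (PAR)/(COM)/(CLOSE) --- using $w\notin\fn(P)$ to decide which parallel component owns the restricted or extruded name, and rebuild the matching derivation accordingly.

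For part~2, ``$\Rightarrow$'' I induct on $P\longmapsto P'$ using \df{reduction}: the communication axiom $\bar xz.P_0\mid x(y).Q_0\longmapsto P_0\mid Q_0\subs zy$ is matched by (COM) from $\bar xz.P_0\stackrel{\bar xz}{\longrightarrow}P_0$ and $x(y).Q_0\stackrel{x(y)}{\longrightarrow}Q_0$ (legitimate since $y\notin\fn((y)Q_0)$); the contextual rules for $\mid$ and $(\nu y)$ are matched by (PAR) and (RES) after applying the induction hypothesis, using that $\equred$ is a congruence; and the structural-closure rule $Q\equred P\longmapsto P'\equred Q'$ is dispatched by combining the induction hypothesis with part~1 of this lemma. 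For ``$\Leftarrow$'', having reduced to $P\stackrel{\tau}{\longrightarrow}P''\Rightarrow P\longmapsto P''$, I induct on the $\tau$-derivation; the $\tau$-producing rules are (PAR)/(RES) (handled by the contextual rules of $\longmapsto$ and the induction hypothesis) and (COM), (CLOSE), (REP-COMM), (REP-CLOSE). For the latter I would use the standard ``shape'' lemmas for labelled transitions --- every $R\stackrel{\bar xz}{\longrightarrow}R'$ arises, up to $\equred$ and $\alpha$-renaming, as $R\equred(\nu\vec v)(\bar xz.R_1\mid R_2)$ with $\vec v$ avoiding $x$ and $z$ and $R'\equred(\nu\vec v)(R_1\mid R_2)$, and analogously for input and bound-output transitions. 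These let one rewrite the source of, say, a (COM) step as a context around $\bar xz.R_1\mid x(y).S_1$, apply the communication axiom of $\longmapsto$ there, and propagate the step back through the restrictions, the parallel contexts and $\equred$; (REP-COMM)/(REP-CLOSE) additionally require unfolding $!P\equred P\mid P\mid !P$ before running the same argument.

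The main obstacle throughout is the interaction of restriction with the (OPEN) and (CLOSE) rules and their freshness side conditions: concretely, the scope-extrusion case of part~1 and, dually, the (CLOSE)/(REP-CLOSE) cases together with the bound-output shape lemma in part~2(``$\Leftarrow$''). Everything else is bookkeeping; in these cases one must be meticulous about which parallel component a private name inhabits and about renaming bound names apart before merging contexts. Since the statement is quoted essentially verbatim from \cite[Lemma~1.4.15]{SW01book} (and \cite{Mi92} for part~2), the paper could in principle simply cite those references; the sketch above records how the proof goes.
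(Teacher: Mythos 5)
The paper does not prove this lemma at all: it is quoted from the literature, with the proof delegated to Sangiorgi--Walker \cite[Lemma~1.4.15]{SW01book} (and Milner \cite{Mi92} for part~2). Your sketch correctly reconstructs that standard argument --- induction over the derivation of $\equred$ (with the congruence, replication, $\alpha$-conversion and scope-extrusion cases) for part~1, and induction over reduction derivations respectively $\tau$-transition derivations with shape analysis for part~2 --- so it is sound and essentially the same approach as the cited proof.
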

The barbs defined in \sect{barbs} can be characterised in terms of the labelled transition relation
as follows:
\begin{remark}
A process $P$ has a {\em strong barb on $x\in\N$}, $P \sbarb x$, iff there
is a $P'$ with \plat{$P \stackrel{x(y)}{\longrightarrow} P'$} for some $y\mathbin\in\N$.
It has a {\em strong barb on $\bar x$}, $P \sbarb {\bar x}$, iff there\vspace{1pt}
is a $P'$ with \plat{$P \stackrel{\bar x z}{\longrightarrow} P'$} or
\plat{$P \stackrel{\bar x(z)}{\longrightarrow} P'$} for some $z\mathbin\in\N$.
A process $P$ has a \mbox{\em weak barb on $a$}
($a \in \{ x, \bar x \,|\, x \in \mathcal N \}$), $P \ocomm a$, iff
there is a $P'$ such that \plat{$P \mathrel{\raisebox{0pt}[8pt][0pt]{$\stackrel{\tau}{\longrightarrow}$}^*} P'$} and
$P' \sbarb a$.

A process $P$ has a {\em channel barb on $x$}, $P \!\ascomm x$,
iff it can perform an action on channel $x$, \ie 
iff \plat{$P \stackrel{\alpha}{\longrightarrow} P'$},
for some $P'$, where $\alpha$ has the form $\bar{x}y$, $\bar{x}(y)$ or $x(y)$.
Moreover, $P\acomm x$ iff a $P'$ exists with \plat{$P \mathrel{\raisebox{0pt}[8pt][0pt]{$\stackrel{\tau}{\longrightarrow}$}^*} P'$} and
and $P' \!\ascomm x$.
\end{remark}

\subsection{Comparing Transition Labels: Early and Late Weak Bisimilarity}

As they make use of intermediate steps (namely the acknowledgement protocol), we must fail proving
the validity of the encodings $\fT_{\rm B}$ or $\fTHT$ up to semantics based on transition labels,
\eg {\em early weak bisimilarity}~\cite{SW01book}.

\begin{definition}\rm\label{df:WB}
A symmetric binary relation {\mrel} on $\pi$-processes $P,Q$ is a {\em early weak
bisimulation} iff $P \mrel Q$ implies
\begin{enumerate}
\item if $P \stackrel{\tau}{\longrightarrow} P'$ then a $Q'$ exists with
  $Q \transtau Q'$ and $P'\mrel Q'$,
\item if $P \stackrel{\alpha}{\longrightarrow} P'$ where $\alpha\mathbin= \bar{x}z$ or
  $\bar{x}(y)$ with $y \mathbin{\notin} \n(P)\cup\n(Q)$,
  then a $Q'$ exists with
  \plat{$Q \transtau\stackrel{\alpha}{\longrightarrow}\transtau Q'$} and $P' \mrel Q'$,
\item if \plat{$P \stackrel{x(y)}{\longrightarrow} P'$} with $y \notin \n(P)\cup\n(Q)$ then for all $w$ a $Q'$
  exists satisfying \plat{$Q \transtau\stackrel{x(y)}{\longrightarrow} \transtau Q'$}
  and  $P'\subs{w}{y} \mrel Q'\subs{w}{y}$.
\end{enumerate}
We denote the largest early weak bisimulation by $\wesim$.
\end{definition}

\noindent
Here $y \notin \n(P)\cup\n(Q)$ merely ensures the usage of fresh names.
A \emph{late} weak bisimulation is obtained by requiring in Clause 3 above that the choice of
$Q'$ is independent of $w$; this gives rise to a slightly finer equivalence relation.

\begin{observation}\label{obs:boudolwesim}\rm
$\fT_{\rm B}$ is not valid up to $\wesim$.
\end{observation}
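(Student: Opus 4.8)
The plan is to exhibit a single source process $P$ and a transition $P\stackrel{\alpha}{\longrightarrow}P'$ with $\alpha\neq\tau$ such that $\fT_{\rm B}(P)$ cannot match this transition in the way Definition~\ref{df:WB} requires. The natural choice is the simplest output process, $P = \bar xz.\nil$, which has the single transition $\bar xz.\nil\stackrel{\bar xz}{\longrightarrow}\nil$. For $\fT_{\rm B}$ to be valid up to $\wesim$ we would need $\fT_{\rm B}(\bar xz.\nil)\wesim\bar xz.\nil$, hence, by Clause~2, a $Q'$ with $\fT_{\rm B}(\bar xz.\nil)\transtau\stackrel{\bar xz}{\longrightarrow}\transtau Q'$ and $\nil\wesim Q'$.

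First I would inspect $\fT_{\rm B}(\bar xz.\nil) = (u)(\bar xu \mid u(v).(\bar vz\mid\nil))$ and enumerate its outgoing transitions. It has no $\tau$-transition at all: the only communication available is between $\bar xu$ and $u(v).\dots$, but $\bar xu$ carries the restricted name $u$ on the free channel $x$, so the only non-$\tau$ transition is the bound output $\stackrel{\bar x(u)}{\longrightarrow}$, after which the residual is $u(v).(\bar vz)$ (up to the restriction being opened). Crucially, no visible action with label $\bar xz$ is ever available from $\fT_{\rm B}(\bar xz.\nil)$: the name actually exported on $x$ is the private $u$, not $z$. Since $\transtau$ here is trivial (no $\tau$-steps), there is simply no sequence $\fT_{\rm B}(\bar xz.\nil)\transtau\stackrel{\bar xz}{\longrightarrow}\transtau Q'$, so Clause~2 fails and $\fT_{\rm B}(\bar xz.\nil)\not\wesim\bar xz.\nil$.

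**I would then note** the general reason behind this: $\wesim$ matches transition labels, and the label $\bar xz$ records that the very value $z$ is emitted on $x$; Boudol's protocol instead emits a fresh private channel $u$ first and only later transmits $z$ over $u$, so the first visible action of the translation carries the "wrong" datum. This argument is robust — it does not depend on which fresh names the encoding picks, nor on the particular definition of $\transtau$ — so it equally rules out late weak bisimilarity, and the same example (or the symmetric input example $x(y).\nil$) works for $\fTHT$.

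**The main obstacle** is purely bookkeeping: one must carefully justify that $\fT_{\rm B}(\bar xz.\nil)$ has \emph{no} $\tau$-transitions, so that the weak transition relation $\transtau$ contributes nothing and the required matching sequence genuinely cannot exist. This follows from the SOS rules of Table~\ref{tab:pisyncsemantics} together with the Harmony Lemma (Lemma~\ref{harmony}): the only potential communication pairs a bound output on $x$ with the guarded input on $u$, and these cannot synchronise because the channels differ and the relevant restriction is on $u$, not $x$. Once this is established, the failure of Clause~2 of Definition~\ref{df:WB} is immediate.
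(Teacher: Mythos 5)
Your proof is correct and uses essentially the same approach as the paper: the same counterexample $P=\bar xz.\nil$ with $\fT_{\rm B}(P)=(u)(\bar xu\mid u(v).(\bar vz\mid\nil))$, rejected because transition labels cannot be matched, the only difference being which mismatch you exhibit — you show the source's free output $\bar xz$ has no weak match in the translation (whose only transition is the bound output $\bar x(u)$, there being no $\tau$-steps), whereas the paper points to the input transition $c(d)$ available to a derivative of the translation but not to the source. Your bookkeeping (no $\tau$-transitions since $\bar xu$ and $u(v)$ use different channels, and the OPEN rule turning the output into a bound output of a fresh name rather than $\bar xz$) is accurate, so Clause~2 of Definition~\ref{df:WB} indeed fails.
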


\begin{proof}
Let $P = \bar{x}z.\bm{0}$ and $\fT_{\rm B}(P) = (u)(\bar{x}u|u(v).(\bar{v}z|\bm{0}))$.
We present the relevant parts of the labelled transition semantics:
\begin{center}
\begin{tikzpicture}[shorten >=1 pt,>=latex',scale=1.0]
  \node(b0) at(0,3){$\bar{x}z.\bm{0}$};
  \node(b1) at(0,1.5){$\bm{0}$};
    \draw[->](b0)--node[mode,right]{\bar{x}z}(b1);
  \node(b0) at(4.5,4.5){$(u)(\bar{x}u|u(v).(\bar{v}z|\bm{0}))$};
  \node(b1) at(4.5,3){$(u)(\bm{0}|c(v).(\bar{v}z|\bm{0}))$};
  \node(b2) at(4.5,1.5){$(u)(\bm{0}|(\bar{d}z|\bm{0}))$};
  \node(b3) at(4.5,0){$(u)(\bm{0}|(\bm{0}|\bm{0}))$};
    \draw[->](b0)--node[mode,right]{\bar{x}(c)}(b1);
    \draw[->](b1)--node[mode,right]{c(d)}(b2);
    \draw[->](b2)--node[mode,right]{\bar{d}z}(b3);
\end{tikzpicture}
\end{center}
Here, the translated term may perform an input transition \plat{$\stackrel{c(d)}{\longrightarrow}$} the source term is not capable of.
Hence, the processes are not equivalent up to $\wesim$.
\qed
\end{proof}
Since late weak bisimilarity is even finer (more discriminating) than $\wesim$, the encoding
$\fT_{\rm B}$ is certainly not valid up to late weak bisimilarity. A similar argument shows that neither
$\fTHT$ is valid up to early or late weak bisimilarity.

\subsection{Weak Barbed Bisimilarity}\label{sec:wbb}

A weaker approach does not compare all the transitions with
  visible labels, for these are merely \emph{potential} transitions,
  that can occur only in certain contexts. Instead it just compares
  internal transitions, together with the information whether a state
  has the potential to perform an input or output over a certain channel:
the barbs of \sect{barbs}.
Combining the notion of barbs
with the transfer property of classical bisimulation for internal actions only
yields {\em weak barbed bisimilarity} \cite{milner:poly}.
Here, two related processes simulate each other's internal transitions and
furthermore have the same weak barbs.

\begin{definition}\rm\label{df:WBB}
A symmetric relation {\mrel} on $\T_\pi$ is a {\em weak barbed bisimulation}
iff $P \mrel Q$ implies
\begin{enumerate}
\item if $P \scomm a$ with $a \in \{ x, \bar x \,|\, x \in \mathcal N \}$ then $Q \ocomm a$ and
\item if $P \stackrel{\tau}{\longrightarrow} P'\hspace{-1pt}$ then a $Q'$ exists with
  $Q \transtau Q'\hspace{-1pt}$ and $P'\mathop{\mrel} Q'\!$.%
\end{enumerate}
The largest weak barbed bisimulation is denoted by $\wbb$, or $\approx_{\mathrm{WBB}}$.
\end{definition}
\noindent
By Lemma~\ref{harmony} this definition can equivalently be stated with $\longmapsto$ in the role of $\stackrel{\tau}{\longrightarrow}$.
One of the main results of this paper is that Boudol's encoding is valid up to $\wbb$.
The proof of this result is given in the \hyperlink{appendix}{appendix}.

\subsection{Asynchronous Weak Barbed Bisimilarity}

In {\em asynchronous weak barbed bisimulation} \cite{acs:asynchronous},
only the names of output channels are observed.
Input barbs are ignored here, as it is assumed that an environment
is able to observe output messages, but not (missing) inputs.

\begin{definition}\rm\label{df:awbb}
A symmetric relation $S$ on $\T_\pi$ is an {\em asynchronous weak barbed bisimulation}
iff $P\mrel Q$ implies
\begin{enumerate}
\item if $P \scomm {\bar{x}}$, then $Q \wcomm {\bar{x}}$, and
\item if $P \stackrel{\tau}{\longrightarrow} P'\hspace{-1pt}$ then a $Q'\hspace{-1pt}$ exists with
  $Q \mathop{\stackrel{\tau}{\longrightarrow}^{\!*}}\! Q'\hspace{-1pt}$ and $P'\mathop{\mrel} Q'\!$.%
\end{enumerate}
The largest asynchronous weak barbed bisimulation is denoted by $\wbbisim$.
\end{definition}

\noindent
Since $\wbbisim$ is a coarser equivalence than $\wbb$, we obtain:

\begin{corollary}\rm
Boudol's encoding is valid up to $\wbbisim$.
\end{corollary}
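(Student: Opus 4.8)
The plan is to obtain this corollary as an immediate consequence of the main result that $\fT_{\rm B}$ is valid up to $\wbb$, together with the fact that asynchronous weak barbed bisimilarity is a coarsening of weak barbed bisimilarity. Recall that, following \sect{validity upto}, a compositional encoding $\fT$ is valid up to a semantic equivalence ${\sim}$ iff $\fT(P) \sim P$ for every source term $P$. Since $\fT_{\rm B}$ is compositional (see \sect{compositionality}), and since $\wbbisim$ is an equivalence relation on $\T_\pi$ (being the largest relation satisfying a bisimulation-style condition that is met by the identity), it suffices to show $\fT_{\rm B}(P) \wbbisim P$ for all $P \in \T_\pi$, and for this it is enough to establish the inclusion ${\wbb} \subseteq {\wbbisim}$.

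To prove ${\wbb} \subseteq {\wbbisim}$ I would simply check that $\wbb$ is itself an asynchronous weak barbed bisimulation in the sense of \df{awbb}; the inclusion then follows because $\wbbisim$ is by definition the largest such relation. The relation $\wbb$ is symmetric on $\T_\pi$, so only the two transfer clauses need verification. Clause~2 of \df{awbb}, the $\stackrel{\tau}{\longrightarrow}$-transfer property, coincides verbatim with Clause~2 of \df{WBB}, hence holds. For Clause~1, assume $P \wbb Q$ and $P \scomm{\bar x}$; since $\bar x \in \{ x, \bar x \,|\, x \in \mathcal N \}$, Clause~1 of \df{WBB} gives $Q \ocomm{\bar x}$, i.e.\ $Q \wcomm{\bar x}$, which is exactly what Clause~1 of \df{awbb} demands. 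Combining ${\wbb}\subseteq{\wbbisim}$ with $\fT_{\rm B}(P) \wbb P$ (the main result, proved in the appendix) yields $\fT_{\rm B}(P) \wbbisim P$, and hence the claim.

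I do not expect any obstacle here: all the work sits in the theorem on validity up to $\wbb$, and this corollary is a one-step semantic weakening. The only thing worth a second of thought is that passing from $\wbb$ to $\wbbisim$ could in principle break the bisimulation game, but it cannot, because an asynchronous weak barbed bisimulation is required to track strictly fewer barbs (only output barbs $\scomm{\bar x}$, not input barbs $\scomm{x}$) while using the same internal-step transfer property, so every weak barbed bisimulation already satisfies the weaker requirements.
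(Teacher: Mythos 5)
Your proposal is correct and matches the paper's reasoning exactly: the corollary is obtained from the theorem that $\fT_{\rm B}$ is valid up to $\wbb$ together with the observation that $\wbbisim$ is coarser than $\wbb$ (your explicit check that $\wbb$ is an asynchronous weak barbed bisimulation just spells out what the paper treats as immediate).
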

In \cite{QW00}, a polyadic version of Boudol's encoding was assumed to be valid up to $\wbbisim$; see
Lemma 17. However, no proof was provided.

\subsection{Weak Asynchronous Bisimilarity}

\noindent
We now know that Boudol's translation is valid up to $\wbbisim$, but
not up to $\wesim$.
A natural step is to narrow down this gap by considering equivalences in between.
The most prominent semantic equivalence for the asynchronous {\piCal} is
weak asynchronous bisimilarity, proposed by Amadio et al.\ \cite{acs:asynchronous}.

A first strengthening of the requirements for $\wbbisim$ is obtained by considering not only
output channels but also the messages sent along them.

\begin{definition}[\cite{acs:asynchronous}]\rm
A symmetric relation {\mrel} on $\T_\pi$ is a {\em weak $o\tau$-bisimulation}
if {\mrel} meets Clauses 1 and 2 (but not necessarily 3) from \df{WB}.
The largest weak $o\tau$-bisimulation is denoted by $\wotau$.
\end{definition}

\noindent
Amadio et al.\ strengthen this equivalence by adding a further constraint
for input transitions.

\begin{definition}[\cite{acs:asynchronous}]\rm
A relation {\mrel} is a {\em weak asynchronous bisimulation}\vspace{1pt} iff {\mrel} is a
weak $o\tau$-bisimulation such that \mbox{$P\mrel Q$} and
\plat{$P \stackrel{\tau}{\longrightarrow}^* \stackrel{x(y)}{\longrightarrow}
\stackrel{\tau}{\longrightarrow}^* P'$} implies
\begin{itemize}
\item either a $Q'$ exists satisfying a condition akin to Clause 3 of
 \df{WB},
\item or a $Q'$ exists such that $Q \stackrel{\tau}{\longrightarrow}^* Q'$ and
$P'\mrel (Q'|\bar{x}y)$.
\end{itemize}
The largest weak asynchronous bisimulation is denoted by $\wab$.
\end{definition}

\begin{observation}\rm
Boudol's translation ${\fT_{\rm B}} : \bbT_{\pims} \rightarrow \bbT_{\pima}$ is
not valid up to $\wotau$, and thus not up to $\wab$.
\end{observation}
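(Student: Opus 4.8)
The plan is to reuse the source process $P = \bar{x}z.\bm{0}$ from \obs{boudolwesim}, together with essentially the same transition picture, but to exploit a different discrepancy. For $\wesim$ the witness was a spurious \emph{input} transition of $\fT_{\rm B}(P)$; that route is now closed, since $\wotau$ is defined by Clauses~1 and~2 of \df{WB} only, which constrain $\tau$-steps and \emph{output} actions but impose no requirement whatsoever on input actions. So I first have to locate an output-side mismatch.

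The key observation is that $P = \bar{x}z.\bm{0}$ performs, by OUTPUT-ACT, the \emph{free} output action carrying the label $\bar{x}z$, reaching $\bm{0}$, whereas $\fT_{\rm B}(P) = (u)(\bar{x}u \mid u(v).(\bar{v}z \mid \bm{0}))$ --- with $u \neq v$ and $u,v \notin \{x,z\}$ as prescribed by Boudol's clause for output prefix --- sends the private name $u$ first, so that, by OUTPUT-ACT, PAR and OPEN, its only visible first step is the \emph{bound} output carrying the label $\bar{x}(w)$; there is no transition of $\fT_{\rm B}(P)$ with the label $\bar{x}z$. Moreover $\fT_{\rm B}(P)$ has no $\tau$-transition at all: its only output prefix $\bar{x}u$ acts on channel $x$ and its only input prefix $u(v)$ on channel $u$, and $x \neq u$, so neither COM nor CLOSE is applicable and there are no replications. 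Hence there is no derivation $\fT_{\rm B}(P) \transtau \stackrel{\bar{x}z}{\longrightarrow} \transtau Q'$, and Clause~2 of the definition of $\wotau$ cannot be met for the move $P \stackrel{\bar{x}z}{\longrightarrow} \bm{0}$. As $\wotau$ is symmetric, this shows that $P$ and $\fT_{\rm B}(P)$ are not related by $\wotau$, i.e.\ $\fT_{\rm B}$ is not valid up to $\wotau$.

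The extension to $\wab$ is then immediate: by the very definition of weak asynchronous bisimulation, every such relation is in particular a weak $o\tau$-bisimulation, so $\wab$ is contained in $\wotau$; therefore $P$ and $\fT_{\rm B}(P)$ are not related by $\wab$ either, and $\fT_{\rm B}$ is not valid up to $\wab$.

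I do not expect a genuine obstacle here --- the argument rests on a single example and involves no computation. The only point requiring a little care is assembling the (very short) list of transitions available to $\fT_{\rm B}(\bar{x}z.\bm{0})$, in particular confirming both that its first output is necessarily \emph{bound} rather than free and that no administrative $\tau$-step can precede it; both facts read off directly from the SOS rules of \tab{pisyncsemantics} and the side conditions $u \neq v$, $u,v \notin \{x,z\}$ of Boudol's encoding.
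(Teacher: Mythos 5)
Your proof is correct and takes essentially the same route as the paper: the same counterexample $\bar{x}z.\bm{0}$, with the mismatch being that the source performs the free output $\bar{x}z$ while $\fT_{\rm B}(\bar{x}z.\bm{0})$ can only emit the bound output $\bar{x}(w)$ (and has no initial $\tau$-steps), which Clause~2 of \df{WB} cannot match, and the step to $\wab$ via $\wab\,\subseteq\,\wotau$ is the same. You merely spell out in more detail what the paper states tersely as ``$\wotau$ differentiates between free and bound names.''
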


\begin{proof}
Consider the proof of \obs{boudolwesim}.
$\bar{x}z.\bm{0}$ sends a free name along $x$ while $(u)(\bar{x}u|u(v).(\bar{v}z|\bm{0}))$ sends a bound name along the same channel.
Since $\wotau$ differentiates between free and bound names, the transition systems of $\bar{x}z$ and its translation are not $\wotau$-equivalent.
\qed
\end{proof}

\subsection{Weak Channel Bisimilarity}
\renewcommand{\sI}{\fT_{\rm HT}}

From the equivalences considered, weak barbed bisimilarity, $\wbb$, is the finest one that supports
the validity of Boudol's translation. However, it does not validate Honda and Tokoro's translation.

\begin{observation}\rm
Honda and Tokoro's translation $\sI$ is not valid up to $\wbbisim$, and thus not up to $\wbb$,
$\wotau$, or $\wab$.
\end{observation}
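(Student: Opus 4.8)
The plan is to exhibit a single source process $P$ whose Honda--Tokoro translation $\sI(P)$ fails to be related to $P$ by any asynchronous weak barbed bisimulation, which then rules out $\wbbisim$ and, a fortiori, the finer relations $\wbb$, $\wotau$ and $\wab$. Since $\wbbisim$ observes only output barbs $\downarrow_{\bar x}$, the obstruction must come from a mismatch of output barbs (or of the $\tau$-behaviour reachable from such a mismatch). The natural candidate is exactly the counterexample already used for barb sensitiveness in \sect{barbs}: an input-prefixed process such as $P = x(y).\bm 0$. Here $P$ has no output barb at all (indeed no barb except the input barb $\downarrow_x$, which $\wbbisim$ ignores), whereas $\sI(x(y).P) = (u)(\bar x u \mid u(y).\sI(P))$ has the output barb $\downarrow_{\bar x}$: the subterm $\bar x u$ is unguarded and not under a restriction on $x$.

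First I would record the labelled-transition / barb facts: using the characterisation of strong and weak barbs via $\longrightarrow$ from the Remark after \lem{barbs}, $P = x(y).\bm 0$ satisfies $P \not\ocomm{\bar z}$ for every $z$ (its only transition is the input $x(w)$, producing no output barb now or ever), while $\sI(P) \scomm{\bar x}$ and hence $\sI(P) \ocomm{\bar x}$. Then I would invoke Clause~1 of \df{awbb}: if some asynchronous weak barbed bisimulation $\mrel$ had $\sI(P) \mrel P$ (by symmetry it is immaterial which way round we take the pair), then from $\sI(P) \scomm{\bar x}$ we would need $P \ocomm{\bar x}$, contradicting the previous sentence. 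Hence $\sI(P) \not\wbbisim P$, so $\sI$ is not valid up to $\wbbisim$.

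Finally, the inclusions ${\wbb} \subseteq {\wbbisim}$, ${\wotau} \subseteq {\wbbisim}$ and ${\wab} \subseteq {\wotau} \subseteq {\wbbisim}$ (the first because weak barbed bisimulation additionally requires matching of input barbs, the others being the output-message refinements of asynchronous weak barbed bisimilarity already discussed in the excerpt) immediately give that $\sI$ is not valid up to any of $\wbb$, $\wotau$, $\wab$ either. I expect no real obstacle here: the only point needing a line of care is the claim that $P = x(y).\bm 0$ can never acquire an output barb, i.e. that it is stuck except for the single input action --- this is immediate from the SOS rules in \tab{pisyncsemantics}, since the only applicable rule is (INPUT-ACT) and the residual $\bm 0\subs w y = \bm 0$ has no transitions. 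One should also note in passing that this same $P$ witnesses the failure of $\sI$ up to $\wbb$ directly, since $P$ and $\sI(P)$ already differ on the output barb $\downarrow_{\bar x}$, so the chain of inclusions is not even strictly needed for $\wbb$; but stating it via the coarsest equivalence $\wbbisim$ is the cleanest presentation.
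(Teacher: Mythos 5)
Your proposal is correct and follows essentially the same route as the paper: a one-prefix counterexample exploiting the fact that $\fT_{\rm HT}$ swaps input and output barbs, so that Clause~1 of \df{awbb} fails immediately. The paper uses the dual witness $\bar xz.\bm 0$, whose output barb $\scomm{\bar x}$ is not weakly matched by its translation $x(u).(\bar uz|\bm 0)$, whereas you use $x(y).\bm 0$, whose translation acquires an output barb $\scomm{\bar x}$ that the source can never exhibit; both arguments are equally valid, and your appeal to the inclusions among $\wbb$, $\wotau$, $\wab$ and $\wbbisim$ matches the paper's hierarchy.
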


\begin{proof}
Let $P = \bar{x}z.\bm{0}$.
Then $P \scomm {\bar{x}}$.
The translation is $\sI(P) = x(u).(\bar{u}z|\bm{0})$ and
$\sI(P) {\not\!\wcomm {\bar{x}}}$.
\qed
\end{proof}

\noindent
To address this problem we introduce an equivalence even weaker than $\wbb$, which does not distinguish between input and output channels.

\begin{definition}\rm\label{df:channel bis}
A symmetric relation {\mrel} on $\T_\pi$ is a {\em weak channel bisimulation}
if $P\mrel Q$ implies
\begin{enumerate}
\item if $P \ascomm x$ then $Q \acomm x$ and
\item if $P \stackrel{\tau}{\longrightarrow} P'$ then a $Q'$ exists with
  $Q \transtau Q'$ and $P'\mathop{\mrel} Q'\!$.%
\end{enumerate}
The largest weak channel bisimulation is denoted $\wlbsim$.
\end{definition}

\begin{theorem}\rm
Honda and Tokoro's encoding $\sI$
is valid up to $\wlbsim$.
\end{theorem}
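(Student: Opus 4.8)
The plan is to establish a relation $\mrel$ on $\T_\pi$ that contains every pair $(\fT_{\rm HT}(P), P)$ for $P \in \T_\pi$, closed under whatever intermediate states arise from the administrative steps of the protocol, and to show that (the symmetric closure of) $\mrel$ is a weak channel bisimulation in the sense of \df{channel bis}. Since Honda \& Tokoro's encoding is compositional (indeed homomorphic on $|$, $!$ and $(x)$, as recalled in \sect{pi2api}), the natural candidate is the smallest relation closed under the operators $|$, $!$, $(x)$ (and under $\equred$) that relates $\fT_{\rm HT}(P)$ to $P$ for every source term $P$, together with the ``half-finished'' configurations produced when a $\tau$-step of $\fT_{\rm HT}(P)$ has consumed the synchronisation on the private channel $u$ but not yet delivered the message $z$. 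Concretely, when $\fT_{\rm HT}(\bar xz.P \mid x(y).Q) = x(u).(\bar uz \mid \fT_{\rm HT}(P)) \mid (u)(\bar xu \mid u(y).\fT_{\rm HT}(Q))$ takes its first $\tau$-step, it reaches (up to $\equred$) $(u)(\bar uz \mid \fT_{\rm HT}(P) \mid u(y).\fT_{\rm HT}(Q))$; this intermediate term should be related to the source term $P \mid x(y).Q$, i.e.\ to the redex before reduction, because it is exactly one committed-but-pending administrative step away from $\fT_{\rm HT}(P \mid Q\subs zy)$.

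First I would make the bisimulation relation precise, e.g.\ as the $\equred$-closure of the congruence (with respect to $|$, $!$, $(x)$) generated by the base pairs $(\fT_{\rm HT}(P),P)$ and the ``pending-message'' pairs $\big((w)(\bar wz \mid \fT_{\rm HT}(P) \mid T), \, P \mid R\big)$ whenever $(T,R)$ is already in the relation and $w$ is suitably fresh; here $T$ plays the role of an encoded input process $u(y).\fT_{\rm HT}(Q)$ that has acquired the private name $w$. Then I would check Clause 1 of \df{channel bis}: a source term $P$ has channel barb $x$ iff it has an unguarded $\bar xu.R$ or $x(y).R$ not under a $(x)$; one checks by structural induction, exactly as in \lem{barbs}, that $\fT_{\rm HT}(\bar xz.P)=x(u).(\bar uz\mid\fT_{\rm HT}(P))$ has channel barb $x$ and $\fT_{\rm HT}(x(y).P)=(u)(\bar xu\mid u(y).\fT_{\rm HT}(P))$ also has channel barb $x$, so the encoding preserves channel barbs on the nose; for the pending-message configurations the freshly restricted name $w$ is never $x$, so these contribute no extra barb, and the barb of $P\mid R$ is matched (after zero $\tau$-steps) by the corresponding summand. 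This is the place where working with \emph{channel} barbs rather than ordinary barbs is essential, since $\fT_{\rm HT}$ swaps input and output.

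Next I would verify Clause 2, the $\tau$-transfer, in both directions. For the ``$\trans{P}$ simulates $P$'' direction: if $P \stackrel\tau\longrightarrow P'$ then by the Harmony Lemma $P\longmapsto P'$, so $P'$ arises from a redex $\bar xz.R_1\mid x(y).R_2$ (or the $!$-variants), and $\fT_{\rm HT}(P)$ can perform the single private-channel synchronisation to reach the corresponding pending-message configuration, which is related to $P'$ by construction (note: \emph{one} target $\tau$-step suffices here, and the second administrative step is absorbed into the relation — this is precisely what makes the image not merely strongly bisimilar). For the ``$P$ simulates $\trans{P}$'' direction: every $\tau$-step of a term in the relation is either (i) a private-channel synchronisation inside some $\fT_{\rm HT}(\bar xz.R_1\mid x(y).R_2)$ subterm, producing a pending-message configuration that is related to $P$ itself (matched by zero source steps), or (ii) the delivery step $(w)(\bar wz\mid\fT_{\rm HT}(R_1)\mid w(y).\fT_{\rm HT}(R_2))\stackrel\tau\longrightarrow (w)(\fT_{\rm HT}(R_1)\mid\fT_{\rm HT}(R_2)\subs zy)\equred\fT_{\rm HT}(R_1\mid R_2\subs zy)$, which is matched by the single source reduction $P\longmapsto P'$ followed by the Harmony Lemma; one must also check that no ``spurious'' communication can occur — the private name $w$ is restricted and occurs only on the two intended prefixes, so no unintended synchronisation is possible, and interaction with the environment only happens via the original channel $x$, which is handled by Clause 1. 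I expect the main obstacle to be getting the definition of the relation right so that it is genuinely closed under all $\tau$-transitions — in particular making sure the pending-message pattern is stable under parallel composition, restriction, replication and $\equred$, and that the freshness side-conditions on $u,v,w$ line up with the $\alpha$-conversion bookkeeping — rather than any single transfer case, each of which is a short calculation once the relation and the Harmony Lemma are in hand. Once Clauses 1 and 2 are verified for the symmetric closure of this relation, $\wlbsim$ contains it, hence $\fT_{\rm HT}(P)\wlbsim P$ for all $P$, and since $\fT_{\rm HT}$ is compositional this is exactly validity up to $\wlbsim$.
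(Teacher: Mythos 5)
Your overall strategy (an explicit relation containing the base pairs $(\fTHT(P),P)$ plus the half-finished protocol configurations, checked against \df{channel bis}) could work, but you align the two protocol steps with the source reduction the wrong way around, and this breaks both clauses. In $\fTHT$ the \emph{first} $\tau$-step --- the synchronisation on the public channel $x$ that passes the private name $u$ --- is the committing step: it must be matched by the source reduction $\bar xz.P\mid x(y).Q\longmapsto P\mid Q\subs{z}{y}$, and only the second (delivery) step is administrative and may be absorbed into the relation. You do the opposite: you match the first target step by zero source steps and relate the resulting configuration $M\equred(u)(\bar uz\mid\fTHT(P)\mid u(y).\fTHT(Q))$ to $P\mid x(y).Q$. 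This cannot satisfy Clause 1: take $P=Q=\mathbf{0}$; then $P\mid x(y).Q$ has the strong channel barb $x$, whereas $M=(u)(\bar uz\mid\mathbf{0}\mid u(y).\mathbf{0})$ has no weak channel barb at all (its only reduction leads to $\mathbf{0}$). Clause 2 fails as well: in $P\mid x(y).Q$ the receiver may still communicate with a sender inside $P$ (say $P=\bar xz'.P''$ with $z'\neq z$), reaching $P''\mid Q\subs{z'}{y}$, while in $M$ the receiver is already committed to obtaining $z$, so with e.g.\ $Q=\bar yy.\mathbf{0}$ no reachable state of $M$ ever has the weak channel barb $z'$. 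Your text is also internally inconsistent on exactly this point: in the completeness direction you say the pending configuration is related to $P'$ (the post-reduction term), which is the correct choice, but in your definition and in case (i) of soundness you relate it to the pre-reduction term. Finally, as literally defined your relation never contains the pending configurations at all: instantiating it requires a pair $(u(y).\fTHT(Q),R)$ to be ``already in the relation'', yet $u(y).\fTHT(Q)$ is in general not in the image of $\fTHT$ (outputs are translated to inputs, so no clause of your definition generates such a pair).

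The paper sidesteps these issues by taking $P\mrel Q$ iff $Q\equiv\!\Rrightarrow^{*}\fTHT(P)$, with $\equiv\!\Rrightarrow$ the inert-reduction relation of Definition~\ref{def:cool-rel}: the committed configuration is related to the \emph{post}-reduction source term (it reaches that term's translation by the single inert delivery step), the committing target step is matched by a source reduction via Lemma~\ref{lem6} (which holds for $\fTHT$ by \cite{vG18a}), completeness uses Lemma~\ref{lem5}, and Clause 1 uses the channel-barb analogue of \lem{barbs} together with Lemma~\ref{postponed barbs}. If you redefine your pending pairs so that the source component is $P\mid Q\subs{z}{y}$ --- equivalently, close the base pairs under ``the target may lag behind by finitely many inert delivery steps'' --- your argument becomes essentially the paper's proof of \thm{boudolwrequ}.
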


\noindent
The proof is similar to the one of \thm{boudolwrequ}. 
Here we use that Lemmas~\ref{lem5} and ~\ref{lem6} also apply to $\sI$ \cite{vG18a}
and \lem{barbs} now holds with $\ascomm{x}$ in the role of $\scomm{a}$.

Since $\wlbsim$ is a coarser equivalence than $\wbb$, we  also obtain
that Boudol's translation is valid up to $\wlbsim$.

\subsection{Overview}

\noindent
We thus obtain the following hierarchy of equivalence relations on {\piCal} processes (\cf
Fig.~\ref{fig:semeq_hier}),
with the vertical lines indicating the realm of validity of $\fTHT$ and $\fT_{\rm B}$, respectively.\vspace{-2ex}
\begin{figure}[ht]
\centering
\begin{tikzpicture}[font=\small]

\node[circle,inner sep=.2em,fill=black,label=below:EWB] (ewb) at (12em,0) {};
\node[circle,inner sep=.2em,fill=black,label=above:WAB] (wab) at (8em,3em) {};
\node[rotate=-35] (wab2ewb) at (10em,1.5em) {$\supset$};

\node[circle,inner sep=.2em,fill=black,label=above:Wo$\tau$] (wot) at (4em,3em) {};
\node (wot2wab) at (6em,3em) {$\supset$};

\node[circle,inner sep=.2em,fill=black,label=below:WBB] (wbb) at (0,0) {};
\node (wbb2ewb) at (6em,0) {$\supset$};

\node (awbb2wot) at (1em,3em) {$\supset$};

\node[circle,inner sep=.2em,fill=black,label=above:AWBB] (awbb) at (-4em,3em) {};
\node[rotate=-35] (awbb2wbb) at (-2em,1.5em) {$\supset$};

\node[circle,inner sep=.2em,fill=black,label=below:WCB] (wcb) at (-8em,0) {};
\node (wcb2wbb) at (-4em,0) {$\supset$};

\draw[semithick] (wab) to (wab2ewb) to (ewb);
\draw[semithick] (wot) to (wot2wab) to (wab);
\draw[semithick] (wbb) to (wbb2ewb) to (ewb);
\draw[semithick] (awbb) to (awbb2wbb) to (wbb);
\draw[semithick] (awbb) to (awbb2wot) to (wot);
\draw[semithick] (wcb) to (wcb2wbb) to (wbb);

\draw[thick] (-6em,4em) -- (-6em,-4em);
\draw[thick] (2em,4em) -- (2em,-4em);
\end{tikzpicture}
\caption{A hierarchy on semantic equivalence relations for {\piCal} processes,
with separation lines indicating
where the encodings discussed in this paper pass and fail validity.}
\label{fig:semeq_hier}
\end{figure}
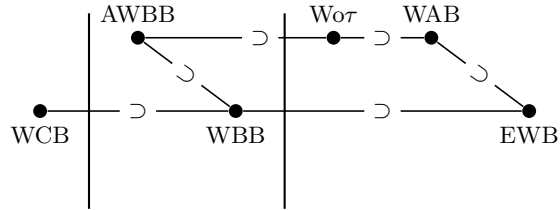

\section{Validity up to an Equivalence versus Validity \`a la Gorla}\label{sec:versus}

The idea of introducing a success predicate $\surd$ to the source and target language of an
encoding, as implicit in Gorla's criterion of success sensitiveness, can be applied to the
equivalence based approach as well.

\begin{definition}\rm\label{df:success bis}
Let $\mathcal{P}_{\rm s},~\mathcal{P}_{\rm t}$ be languages equipped with a reduction relation
$\longmapsto$, and \plat{$\mathcal{P}_{\rm s}^\surd,\,\mathcal{P}_{\rm t}^\surd$} their extensions with a success predicate $\surd$.
A symmetric relation {\mrel} on \plat{$\T\mathbin{:=}\T_{\rm s}^\surd \uplus \T_{\rm t}^\surd$} is a {\em success respecting weak reduction bisimulation}
if $P\mathbin{\mrel} Q$ implies
\begin{enumerate}
\item if $P \sbarb \surd$ then $Q \ocomm \surd$ and
\item if $P \longmapsto P'$ then a $Q'$ exists with
  $Q \Longmapsto Q'$ and $P'\mathop{\mrel} Q'$.%
\end{enumerate}
The largest success respecting weak reduction bisimulation is denoted \plat{$\wssbb$}.

An compositional encoding $\fT:\T_{\rm s} \rightarrow \T_{\rm t}$ is \emph{valid up to \plat{$\wssbb$}}
if its extension \plat{$\fT^\surd_{~}:\T_{\rm s}^\surd \rightarrow \T_{\rm t}^\surd$}, defined by \plat{$\fT^\surd(\surd)\mathbin{:=}\surd$},
satisfies \plat{$\fT^\surd(P) \wssbb P$} for all \plat{$P\mathbin\in \T_{\rm s}^\surd\!$}.
\end{definition}
Trivially, a variant of \lem{barbs} with $\surd$ in the role of $a$ holds for $\fT_{\rm B}$ as well
as $\fTHT$: we have $P \scomm \surd$ iff $\fT_{\rm B}(P) \scomm \surd$ iff $\fTHT(P) \scomm \surd$.
Using this, the material in the \hyperlink{appendix}{appendix} implies that:

\begin{theorem}\rm\label{thm:wssbb}
  The encodings $\fT_{\rm B}$ and $\fTHT$ are valid up to \plat{$\wssbb$}.
  \qed
\end{theorem}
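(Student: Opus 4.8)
The plan is to lift the proof of $\fT_{\rm B} \wbb P$ and $\fT_{\rm HT} \wlbsim P$ (as referenced in the appendix and \thm{boudolwrequ}) to the success-respecting, success-predicate-enriched setting, and to observe that adding $\surd$ is essentially orthogonal to the protocol-level reasoning that underlies those results. Concretely, for a source term $P \in \T_{\rm s}^\surd$ one can write $P$ as built up from its syntax, where occurrences of $\surd$ behave as an inert constant with respect to reduction: $\surd$ has no reductions, and $\fT^\surd(\surd) = \surd$ by definition. First I would argue that the relation witnessing $\fT_{\rm B}(Q) \wbb Q$ in the appendix — call it $\mathcal{R}$, a weak barbed bisimulation containing all pairs $(\fT_{\rm B}(Q),Q)$ and closed under the intermediate protocol states — extends to a success respecting weak reduction bisimulation $\mathcal{R}^\surd$ on $\T_{\rm s}^\surd \uplus \T_{\rm t}^\surd$ simply by putting $\surd$ in parallel: whenever $(R,S) \in \mathcal{R}$ we also take $\bigl((\nu \vec y)(R \mid R_\surd),\, (\nu \vec y)(S \mid S_\surd)\bigr)$ for matching ``success contexts'', and more carefully we build $\mathcal{R}^\surd$ compositionally following the inductive structure of the encoding clauses, exactly as $\mathcal{R}$ is built, but now allowing $\surd$ as one of the leaves.

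The key steps, in order, are: (1) note that $\wssbb$ relates processes by reductions ($\longmapsto$) rather than $\tau$-labelled transitions, but by Lemma~\ref{harmony}(2) these coincide up to $\equred$, so the reduction-based transfer property of $\wssbb$ is inter-derivable with the $\stackrel{\tau}{\longrightarrow}$-based one used in $\wbb$; hence the $\tau$-matching work already done in the appendix transfers verbatim. (2) Verify the $\surd$-barb clause: by the remarked variant of \lem{barbs}, $P \scomm \surd \Leftrightarrow \fT_{\rm B}(P) \scomm \surd \Leftrightarrow \fT_{\rm HT}(P) \scomm \surd$, and more generally every intermediate protocol state $R$ with $(R,S) \in \mathcal{R}$ has the same $\surd$-barb as $S$, because the encoding never guards or unguards a pre-existing top-level $\surd$ (the contexts $C_\op$ introduced for output/input prefixing place their fresh material under guards or restrictions, not around the translated argument in a way that could capture or reveal $\surd$). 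This is the clause that has to be checked but it is immediate from the syntactic shape of the clauses. (3) Conclude that $\mathcal{R}^\surd$ is a success respecting weak reduction bisimulation containing $(\fT_{\rm B}^\surd(P),P)$ for all $P \in \T_{\rm s}^\surd$, hence $\fT_{\rm B}^\surd(P) \wssbb P$; (4) repeat for $\fT_{\rm HT}$, using that Lemmas~\ref{lem5} and~\ref{lem6} also apply to $\fT_{\rm HT}$ and that the $\surd$-barb is a special case that behaves just like a channel barb does for the $\wlbsim$-argument — in fact $\surd$ is observed identically by $\wssbb$ regardless of which encoding is used, so no input/output swap issue arises.

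The main obstacle is making precise the claim in step (2)–(3) that the appendix's bisimulation relation is genuinely ``compositional'' in a way that allows inserting $\surd$ at the leaves and still get a bisimulation: one must check that the relation $\mathcal{R}$ is preserved by all the target-language contexts $C_\op$ used by the encodings (parallel composition, restriction, replication, and the specific guarded contexts for the prefixes), and that this congruence-like closure is compatible with the presence of a success predicate whose only semantic role is the extra barb clause. Since the target contexts for $|$, $(\nu y)$ and $!$ are just the corresponding operators (the encodings translate these homomorphically), and $\wbb$ is already known to be preserved by these (this is implicit in the appendix proof), the only genuinely new work is the guarded-prefix contexts, where one checks directly that placing $\surd$ inside the translated continuation does not interfere with the protocol steps — which follows because the protocol steps and barbs of the surrounding context are independent of whether the continuation exhibits $\surd$. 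I expect this to be routine given the appendix, so the theorem follows by a short bisimulation argument rather than any new construction, which is presumably why the statement is offered with a bare \qed.
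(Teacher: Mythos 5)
Your overall strategy matches the paper's: extend the appendix argument to the $\surd$-enriched languages, use the $\surd$-analogue of \lem{barbs}, invoke the Harmony Lemma to move between reductions and $\tau$-steps, and observe that Lemmas~\ref{lem5} and~\ref{lem6} also hold for $\fTHT$, so that $\surd$, unlike the barbs $x$ and $\bar x$, raises no input/output-swap issue. However, the construction you sketch for the witnessing relation rests on a step that is both unnecessary and, as justified, false. The appendix relation is not built compositionally over the encoding clauses and needs no closure under target contexts: it is the global relation $P \mrel Q$ iff $Q \equiv\!\Rrightarrow^* \fT_{\rm B}(P)$, and once one sets $\fT^\surd(\surd):=\surd$ this definition applies verbatim to every $P\in\T_{\rm s}^\surd$, wherever $\surd$ occurs. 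Your ``main obstacle''---showing that $\mathcal{R}$ is preserved by the contexts $C_\op$, which you claim is ``implicit in the appendix proof'' for $|$, $(\nu y)$ and $!$---is neither established anywhere in the appendix nor true as a general fact: weak barbed bisimilarity is not preserved by parallel composition even in this choice-free fragment. For instance, for fresh $a,w$ one has $x(y).\bar yw \wbb x(y).\nil$ (both have only the barb $x$ and no reductions), yet composing both with $\bar xa$ separates them, since only the former can weakly reach a state with barb $\bar a$. So a congruence-closure construction of $\mathcal{R}^\surd$ would require a separate, non-trivial argument exploiting the special shape of the pairs---exactly what the direct global definition renders unnecessary.

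What actually needs checking, and what the paper's justification amounts to, is: (i) the $\surd$-variant of \lem{barbs}, $P \scomm \surd$ iff $\fT_{\rm B}(P) \scomm \surd$ iff $\fTHT(P) \scomm \surd$, immediate from the encoding clauses since $\fT^\surd(\surd)=\surd$ and translated continuations are never discarded; (ii) the $\surd$-variant of Lemma~\ref{postponed barbs}, i.e.\ inert steps $\equiv\!\Rrightarrow$ do not destroy an unguarded $\surd$---this is the precise content of your claim that intermediate protocol states agree with the source on the $\surd$-barb; and (iii) that Lemmas~\ref{lem1}--\ref{lem6} remain valid over the $\surd$-enriched languages, which holds because $\surd$ is an inert constant. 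With these, the proof of Theorem~\ref{thm:boudolwrequ} goes through word for word with $\surd$ as the only barb, for $\fT_{\rm B}$ and likewise for $\fTHT$. In short: you identified the right ingredients and the right high-level route, but replace the compositional/``success context'' construction of $\mathcal{R}^\surd$ by the direct relation $Q \equiv\!\Rrightarrow^* \fT^\surd(P)$; as it stands, the congruence claim it leans on is a genuine gap.
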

This approach has the distinct advantage over dealing with input and output barbs that both
encodings are seen to be valid without worrying on what kinds of barbs to use exactly.

The following correspondence between operational correspondence, success sensitivity and validity up
to \plat{$\wssbb$} was observed in \cite{EPTCS190.4}, and not hard to infer from the definitions.

\begin{theorem}\rm\label{thm:versus}
  An encoding $\fT$ is success sensitive\vspace{-2pt} and satisfies operational correspondence criteria
  $(\mathfrak{C}')$ and $(\mathfrak{W})$, taking $\asymp$ to be $\wssbb$, iff it is valid up to $\wssbb$.
  \qed
\end{theorem}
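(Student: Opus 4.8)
The plan is to prove the two implications of the biconditional separately. Throughout, $\fT$ denotes the extension $\fT^\surd$ (with $\fT^\surd(\surd)=\surd$) that is implicit both in the success-sensitiveness clause of \df{valid} and in the notion ``valid up to $\wssbb$'' of \df{success bis}; compositionality of $\fT$ is part of the latter notion but plays no active role here and is merely carried along on both sides. Before either implication I would record three routine facts about $\wssbb$: (i) it is an equivalence relation, since the identity and the relational composition $\wssbb\cdot\wssbb$ are again success respecting weak reduction bisimulations; (ii) if $P\wssbb Q$ and $P\Longmapsto P'$ then there is a $Q'$ with $Q\Longmapsto Q'$ and $P'\wssbb Q'$, obtained by iterating Clause~2 of \df{success bis} along the reduction sequence; and (iii) if $P\wssbb Q$ and $P\ocomm{\surd}$ then $Q\ocomm{\surd}$, obtained by first using (ii) to reach the state carrying the strong $\surd$-barb and then applying Clause~1.

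For ``valid up to $\wssbb$ $\Rightarrow$ the three criteria'' I would assume $\fT(P)\wssbb P$ for all $P$ and read each criterion off directly. Success sensitiveness is fact (iii) applied in both directions to $S\wssbb\fT(S)$. For $(\mathfrak{C}')$: given $S\Longmapsto_{\rm s}S'$, fact (ii) applied to $S\wssbb\fT(S)$ yields $\fT(S)\Longmapsto_{\rm t}U$ with $S'\wssbb U$, and composing with $\fT(S')\wssbb S'$ gives $U\wssbb\fT(S')$, i.e.\ $(\mathfrak{C}')$ with $\asymp_{\rm t}={\wssbb}$. For $(\mathfrak{W})$: given $\fT(S)\Longmapsto_{\rm t}\pT$, fact (ii) applied to $\fT(S)\wssbb S$ yields $S\Longmapsto_{\rm s}S'$ with $\pT\wssbb S'$, and again composing with $S'\wssbb\fT(S')$ gives $\pT\wssbb\fT(S')$.

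For the converse I would, assuming $\fT$ success sensitive and satisfying $(\mathfrak{C}')$ and $(\mathfrak{W})$ with $\asymp_{\rm t}={\wssbb}$, exhibit a success respecting weak reduction bisimulation containing $(\fT(P),P)$ for every $P$; maximality of $\wssbb$ then yields $\fT(P)\wssbb P$ for all $P$, which is validity up to $\wssbb$. The candidate is the symmetric relation
\[
\mrel \;:=\; \{(T,S)\mid T\wssbb\fT(S)\}\;\cup\;\{(S,T)\mid T\wssbb\fT(S)\},
\]
which contains $(\fT(P),P)$ because $\fT(P)\wssbb\fT(P)$. For a pair $(T,S)$ with $T\wssbb\fT(S)$: if $T\sbarb{\surd}$ then $\fT(S)\ocomm{\surd}$ by (iii) and $S\ocomm{\surd}$ by success sensitiveness; if $T\longmapsto T'$ then by (ii) there is a $U$ with $\fT(S)\Longmapsto U$ and $T'\wssbb U$, and $(\mathfrak{W})$ supplies $S'$ with $S\Longmapsto_{\rm s}S'$ and $U\wssbb\fT(S')$, whence $T'\wssbb\fT(S')$ and $(T',S')\in\mrel$. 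For a pair $(S,T)$ with $T\wssbb\fT(S)$ the argument is symmetric but uses the other direction of success sensitiveness and $(\mathfrak{C}')$ in place of $(\mathfrak{W})$: from $S\longmapsto S'$, $(\mathfrak{C}')$ gives $\fT(S)\Longmapsto U'$ with $U'\wssbb\fT(S')$, and (ii) applied to $T\wssbb\fT(S)$ gives $T\Longmapsto T'$ with $U'\wssbb T'$, so $T'\wssbb\fT(S')$ and $(S',T')\in\mrel$.

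I expect the only real obstacle to be recognising that the naive relation $\{(\fT(S),S)\}$ is \emph{not} itself a bisimulation --- a reduction of $\fT(S)$ is answered by $(\mathfrak{W})$ only up to $\wssbb$, not syntactically --- so the relation must be pre-closed under $\wssbb$ on the target side, as above. Everything else is bookkeeping: the mismatch between the ``$\longmapsto$ versus $\Longmapsto$'' shapes of \df{success bis} and of $(\mathfrak{C}')$/$(\mathfrak{W})$ is absorbed entirely by fact (ii), and the use of \lem{barbs} with $\surd$ in the role of $a$ is not needed for this theorem (it only enters when connecting $\wssbb$ back to concrete input/output barbs).
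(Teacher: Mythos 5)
Your proof is correct, and it is essentially the argument the paper has in mind: the paper offers no proof of Theorem~\ref{thm:versus} beyond asserting it is "not hard to infer from the definitions" (citing the earlier observation in the literature), and your unfolding — reading off success sensitiveness, $(\mathfrak{C}')$ and $(\mathfrak{W})$ from $\fT(P)\wssbb P$ in one direction, and in the other exhibiting the candidate bisimulation closed under $\wssbb$ on the target side rather than the naive graph $\{(\fT(S),S)\}$ — is exactly the intended direct argument, with the one genuine care point correctly identified and handled.
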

This yields the result promised in \sect{operational correspondence}:
\begin{corollary}\rm
  The encodings $\fT_{\rm B}$ and $\fTHT$ satisfy criterion $(\mathfrak{W})$.
\end{corollary}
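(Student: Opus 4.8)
The plan is to obtain the corollary as a one-line consequence of \thm{wssbb} and \thm{versus}. By \thm{wssbb}, both $\fT_{\rm B}$ and $\fTHT$ are valid up to $\wssbb$. \thm{versus} states that an encoding $\fT$ is success sensitive and satisfies the operational correspondence criteria $(\mathfrak{C}')$ and $(\mathfrak{W})$, with $\asymp$ taken to be $\wssbb$, if and only if it is valid up to $\wssbb$. Hence \thm{versus}, applied to $\fT\in\{\fT_{\rm B},\fTHT\}$, yields that these encodings are success sensitive and satisfy both $(\mathfrak{C}')$ and $(\mathfrak{W})$ with $\asymp_{\rm t}:=\wssbb$; in particular they satisfy $(\mathfrak{W})$. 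This is exactly the claim, since the ``version of weak barbed bisimilarity'' announced in \sect{operational correspondence} as the witnessing target equivalence is precisely $\wssbb$ --- weak barbed bisimilarity with the success predicate $\surd$ in the role of the barb.

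First I would spell out the harmless bookkeeping needed to reconcile the two formulations. Criterion $(\mathfrak{W})$ was stated in \sect{operational correspondence} for the plain languages $\T_{\rm s},\T_{\rm t}$, whereas \thm{wssbb} and \thm{versus} live in the $\surd$-extended setting $\T_{\rm s}^\surd,\T_{\rm t}^\surd$ with the extended encodings $\fT^\surd$ (where $\fT^\surd(\surd):=\surd$). Since the reduction relation on $\T_{\rm t}^\surd$ restricts on $\T_{\rm t}$-terms to that of $\T_{\rm t}$, and $\wssbb$ restricts to a semantic equivalence on $\T_{\rm t}^\surd$ and hence on $\T_{\rm t}$, the instance of $(\mathfrak{W})$ for $\fT^\surd$ supplied by \thm{versus} specialises verbatim to the instance for the original encoding on source terms not containing $\surd$. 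So nothing beyond invoking the two theorems is required.

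If, for $\fT_{\rm B}$, one prefers a target equivalence phrased purely in terms of the traditional input/output barbs, one can instead rerun the reasoning behind \thm{versus} with $\wbb$ in place of $\wssbb$ (the argument is identical, the strong barbs $\scomm x,\scomm{\bar x}$ taking over the role of $\scomm\surd$) and combine it with the main result \thm{boudolwrequ} that $\fT_{\rm B}$ is valid up to $\wbb$; this delivers $(\mathfrak{W})$ for $\fT_{\rm B}$ with $\asymp_{\rm t}:=\wbb$ literally. No such literal version is available for $\fTHT$, which is not valid up to $\wbb$ (see \sect{validity upto}); the abstract barb $\surd$ --- or, alternatively, channel barbs --- is genuinely needed here, which is exactly the reason for working with $\wssbb$. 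In any case I expect no real obstacle: the substance has been discharged already, in \thm{wssbb} via the appendix together with the $\surd$-variant of \lem{barbs}, and in \thm{versus} by unfolding the definitions; the only points requiring care are the two just noted --- identifying the intended target equivalence and passing between the $\surd$-free and $\surd$-extended formulations of $(\mathfrak{W})$ --- and neither involves any computation.
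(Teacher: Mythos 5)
Your proposal is correct and is essentially the paper's own argument: the corollary is obtained directly by combining \thm{wssbb} with the equivalence stated in \thm{versus}, taking $\asymp_{\rm t}$ to be $\wssbb$. The bookkeeping you add about passing between the $\surd$-free and $\surd$-extended formulations, and the side remark on $\wbb$ for $\fT_{\rm B}$, are harmless extras beyond what the paper spells out.
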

The validity of $\fT_{\rm B}$ and $\fTHT$ by Gorla's criteria, established in \cite{vG18a},
by the analysis of \cite{EPTCS190.4}, already implied that $\fT_{\rm B}$ and $\fTHT$ are valid up to
\emph{success respecting$\,$ coupled reduction similarity}~\cite{EPTCS190.4}, a semantic equivalence
strictly coarser than \plat{$\wssbb\!\!$}.\linebreak
\thm{wssbb} yields a nontrivial strengthening of that result.

Gorla's criterion of divergence reflection can be strengthened to \emph{divergence preservation}
by requiring\vspace{-2ex} $$\trans{S} \longmapsto_{\rm t}^\omega ~\Leftrightarrow~ S \longmapsto_{\rm s}^\omega\;;$$
by \cite[Remark 1]{vG18a} this criterion is satisfied by  $\fT_{\rm B}$ and $\fTHT$ as well.
A bisimulation $\mrel$ is said to preserve divergence iff $P\mathbin{\mrel} Q$ implies
$P \longmapsto_{\rm t}^\omega ~\Leftrightarrow~ Q \longmapsto_{\rm s}^\omega$;
the largest divergence preserving, success respecting weak reduction bisimulation is denoted \plat{$\wssbbd$}.
As observed in \cite{EPTCS190.4}, \thm{versus} can be extended as follows with divergence
preservation:
\begin{observation}\rm
  An encoding $\fT$ is success sensitive, divergence preserving,\vspace{-2pt} and satisfies operational correspondence criteria
  $(\mathfrak{C}')$ and $(\mathfrak{W})$, taking $\asymp$ to be $\wssbbd$,\vspace{-2pt} iff it is valid up to $\wssbbd$.
  \qed
\end{observation}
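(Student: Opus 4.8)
\noindent\emph{Proof plan.} The plan is to replay the proof of \thm{versus} almost verbatim, replacing $\wssbb$ by $\wssbbd$ and threading one extra invariant --- divergence --- through every matching step. First I would record two facts used freely below: (i) $\wssbbd$ is an equivalence relation, since reflexivity and symmetry are built into \df{success bis} and transitivity holds because the relational composition of two divergence preserving, success respecting weak reduction bisimulations is again one; and (ii) any such bisimulation matches $\Longmapsto$ by $\Longmapsto$, and --- using Clause~1 as well --- satisfies $P\wssbbd Q \Rightarrow (P\ocomm\surd \Leftrightarrow Q\ocomm\surd)$, both by a straightforward induction on the length of a reduction sequence. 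As in \df{valid}, the criteria $(\mathfrak{C}')$, $(\mathfrak{W})$, success sensitiveness and divergence preservation are read as statements about the $\surd$-extended encoding $\fT^\surd$ on $\T_{\rm s}^\surd$ and $\T_{\rm t}^\surd$.

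For the implication from validity up to $\wssbbd$ to the four criteria, assume $\fT^\surd(P)\wssbbd P$ for all $P\in\T_{\rm s}^\surd$. Success sensitiveness is fact (ii) applied to $P$ and $\fT^\surd(P)$; divergence preservation is the divergence clause of $\wssbbd$ applied to the same pair. For $(\mathfrak{C}')$: given $S\Longmapsto_{\rm s}S'$, use $S\wssbbd\fT^\surd(S)$ to get $\fT^\surd(S)\Longmapsto_{\rm t}T'$ with $S'\wssbbd T'$, and compose with $\fT^\surd(S')\wssbbd S'$. For $(\mathfrak{W})$: given $\fT^\surd(S)\Longmapsto_{\rm t}T$, use $\fT^\surd(S)\wssbbd S$ to get $S\Longmapsto_{\rm s}S'$ with $T\wssbbd S'$, and compose with $S'\wssbbd\fT^\surd(S')$.

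For the converse, assume the four criteria with $\asymp$ instantiated to $\wssbbd$, and take the symmetric relation
$$\mathcal{R}=\{(T,S),(S,T)\mid S\in\T_{\rm s}^\surd,\ T\in\T_{\rm t}^\surd,\ T\wssbbd\fT^\surd(S)\}.$$
I would verify that $\mathcal{R}$ is a divergence preserving, success respecting weak reduction bisimulation on $\T_{\rm s}^\surd\uplus\T_{\rm t}^\surd$. The two success clauses follow from success sensitiveness together with the fact that $\wssbbd$ respects $\ocomm\surd$. A step $T\longmapsto_{\rm t}T'$ is matched by first matching it against $\fT^\surd(S)$ via $\wssbbd$, i.e.\ $\fT^\surd(S)\Longmapsto_{\rm t}T_1$ with $T'\wssbbd T_1$, and then applying $(\mathfrak{W})$ to $\fT^\surd(S)\Longmapsto_{\rm t}T_1$ to obtain $S\Longmapsto_{\rm s}S'$ with $T_1\wssbbd\fT^\surd(S')$, whence $T'\wssbbd\fT^\surd(S')$, so $(T',S')\in\mathcal{R}$. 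A step $S\longmapsto_{\rm s}S'$ is matched dually: $(\mathfrak{C}')$ yields $\fT^\surd(S)\Longmapsto_{\rm t}T_1$ with $T_1\wssbbd\fT^\surd(S')$, and then $T\wssbbd\fT^\surd(S)$ gives $T\Longmapsto_{\rm t}T_2$ with $T_2\wssbbd T_1$, so $(S',T_2)\in\mathcal{R}$. The divergence clause chains the divergence clause of $\wssbbd$ with divergence preservation of $\fT$; since both are biconditionals they compose on either side. Hence $\mathcal{R}\subseteq\wssbbd$, and since $\fT^\surd(P)\wssbbd\fT^\surd(P)$ puts $(\fT^\surd(P),P)$ in $\mathcal{R}$, we get $\fT^\surd(P)\wssbbd P$ for all $P$, i.e.\ $\fT$ is valid up to $\wssbbd$.

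I do not expect a genuine obstacle: the whole argument is bookkeeping, and the only points needing a moment's care are the standing convention about the $\surd$-extension and the repeated implicit use of transitivity of $\wssbbd$. The divergence component adds nothing subtle precisely because divergence preservation is a two-way implication on each related pair, hence is preserved under relational composition --- the same reason \thm{versus} itself goes through.
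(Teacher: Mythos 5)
Your proof is correct, and it is precisely the routine verification that the paper leaves implicit: this observation is stated with no proof (it is attributed to [EPTCS190.4] as an extension of \thm{versus}, which the paper itself calls "not hard to infer from the definitions"), and your two-directional argument --- transitivity/weak-step matching of $\wssbbd$ for one direction, the explicit symmetric relation $\mathcal{R}$ built from $(\mathfrak{C}')$, $(\mathfrak{W})$, success sensitiveness and divergence preservation for the other --- is exactly the intended bookkeeping, with the divergence clause composing as you say. The only caveat, inherited from the paper's own phrasing rather than introduced by you, is that "valid up to" officially presupposes compositionality, which none of the four listed criteria supply.
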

Hence, $\fT_{\rm B}$ and $\fTHT$ are valid up to \plat{$\wssbbd$}.
This statement implies all criteria of Gorla, except for name invariance.

In \cite[Definition~26]{vG18e} the notion of \emph{divergence preserving branching barbed bisimilarity} is defined.
This definition is parametrised by the choice of barbs; when taking the success predicate $\surd$ as
only barb, it could be called \emph{divergence preserving, success respecting branching reduction bisimilarity}.\vspace{-2pt}
It is strictly finer then $\wssbbd$. It is not hard to adapt the proof of \thm{boudolwrequ} in the
\hyperlink{appendix}{appendix} to show that $\fT_{\rm B}$ and $\fTHT$ are even valid up to this equivalence.

\section{Full Abstraction}\label{sec:full abstraction}

The criterion of full abstraction is parametrised by the choice of two semantic equivalences $\sim_{\rm S}$ and $\sim_{\rm T}$,
one on the source and one on the target language. It requires, for source expressions $P$ and $Q$, that
$P \sim_{\rm S} Q \Leftrightarrow \fT(P) \sim_{\rm T} \fT(Q)$.

It is well known that the encodings $\fT_{\rm B}$ and $\fTHT$ fail to be fully abstract w.r.t.\ $\cong^c$ and $\cong^c_a$.
Here $\cong^c$ is \emph{weak barbed congruence}, the congruence closure of \plat{$\wssbb$} (or $\wbbisim$) on the source language, and
$\cong^c_a$ is \emph{asynchronous weak barbed congruence}, the congruence closure of \plat{$\wssbb$}
(or $\wbbisim$) on the target language.
These are often deemed to be the most natural semantic equivalences on $\pims$ and $\pima$.
The well-known counterexample is given by the $\pims$ processes $\bar x z |  \bar x z$ and $\bar xz.\bar x z$.
Although related by $\cong^c$, their translations are not related by $\cong^c_a$.

In \cite{DYZ18} this problem is addressed by proposing a strict subcalculus $\it SA\pi$ of the 
target language that contains the image of the source language under of a version Honda
\& Tokoro's encoding, such that this encoding is fully abstract w.r.t.\ $\cong^c$ and
the congruence closure of \plat{$\wssbb$} (or $\wbbisim$) w.r.t.\ $\it SA\pi$.
In \cite{QW00} a similar solution to the same problem was found earlier, but for a variant of
Boudol's encoding from the \emph{polyadic} $\pi$-calculus to the (monadic) asynchronous
$\pi$-calculus.  They define a class of \emph{well-typed} expressions in the asynchronous
$\pi$-calculus, such that the well-typed expressions constitute a subcalculus of the target language
that contains the image of the source language under the encoding. Again, the encoding is fully
abstract w.r.t.\ $\cong^c$ and the congruence closure of \plat{$\wssbb\!\!$} (or $\wbbisim$) w.r.t.\ that sublanguage.

By \cite[Theorem 4]{vG18b} such results can always be achieved, namely by taking
as target language exactly the image of the source language under the encoding.
In this sense a full abstraction result is a direct consequence of the validity of the encodings up
to \plat{$\wssbb$}, taking for $\sim_{\rm S}$ the congruence closure of \plat{$\wssbb$} w.r.t.\ the source language,
and for $\sim_{\rm T}$ the congruence closure of \plat{$\wssbb$} w.r.t.\ the image of the source language
within the target language.
What the results of \cite{QW00,DYZ18} add is that the sublanguage may be strictly larger than the
image of the source language, and that its definition is not phrased in terms of the encoding.

\section{Conclusion}\label{sec:conclusion}

We examined which of the quality criteria for encodings from the literature support the validity
of the well-known encodings $\fT_{\rm B}$ and $\fTHT$ of the asynchronous into the synchronous $\pi$-calculus.
It was already known \cite{vG18a} that these encodings are valid \`a la Gorla \cite{Gorla10a}; this implies that
they are valid up to success respecting coupled reduction similarity \cite{EPTCS190.4}.
We strengthened this result by showing that they are even valid up to divergence preserving,
success respecting weak reduction bisimilarity. That statement implies all criteria of Gorla, except for name invariance.
Moreover, it implies a stronger form of operation soundness then considered by Gorla, namely
\begin{center}
\mbox{}\hfill
if $\trans{S} \Longmapsto_{\rm t} \pT$ then $\exists S'\!:$ $S\Longmapsto_{\rm s} S'$ and $\pT\asymp_{\rm t} \trans{S'}$.\hfill~~$(\mathfrak{W})$
\end{center}
Crucial for all these results is that we employ Gorla's external barb $\surd$, a success predicate
on processes. When reverting to the internal barns $x$ and $\bar x$ commonly used in the
$\pi$-calculus, we see a potential difference in quality between the encodings $\fT_{\rm B}$ and $\fTHT$.
Boudol's translation $\fT_{\rm B}$ is valid up to weak barbed bisimilarity, regardless whether all
barbs are used, or only output barbs $\bar x$. However, Honda and Tokoro's translation $\fTHT$
is not valid under either of these forms of weak barbed bisimilarity.
In order to prove the validity of $\fTHT$, we
had to use the novel weak channel bisimilarity that does not distinguish between input and output channels.
Conversely, we conjecture that there is no natural equivalence for which $\fTHT$ is valid, but $\fT_{\rm B}$ is not.
Hence, Honda and Tokoro's encoding can be regarded as weaker than the one of Boudol.
Whether $\fT_{\rm B}$ is to be preferred, because it meets stronger
requirements/equivalences, is a decision that should be driven by
the requirements of an application the encoding is used for.

The validity of $\fT_{\rm B}$ under semantic equivalences has earlier been investigated in \cite{CC01,CCP07},
In \cite{CC01} it is established that $\fT_{\rm B}$ is valid up to \emph{may testing}
\cite{DH84} and \emph{fair testing equivalence} \cite{BRV95,NC95}. Both results now follow from
\thm{wssbb}, since may and fair testing equivalence are coarser then \plat{$\wssbb$}.
On the other hand, \cite{CC01} also shows that $\fT_{\rm B}$ is not valid up to a
form of \emph{must testing}; in \cite{CCP07} this result is strengthened to pertain to any encoding
of $\pims$ into $\pima$.\vspace{1pt} It follows that this form of must testing equivalence is not implied by
\plat{$\wssbb$}, and not even by \plat{$\wssbbd$}.

\bibliographystyle{eptcs}
\renewcommand{\doi}[1]{doi:\urlalt{http://dx.doi.org/#1}{#1}}
\bibliography{../../../../Stanford/lib/abbreviations,../../../../Stanford/lib/dbase,ipl}
\vspace{2ex}

\appendix

\renewcommand{\sI}{\fT_{\rm B}}

\noindent
\hypertarget{appendix}{{\large\bf Appendix: Boudol's Translation is Valid up to $\wbb$}}\vspace{3.5ex}

\noindent
Before we prove validity of Boudol's translation up to weak barbed bisimulation, we further investigate the protocol steps established by Boudol's encoding.
Let $P'\mathbin=\bar{x}z.P$ and $Q'\mathbin=x(y).Q$. Pick $u,v$ not free in $P$ and $Q$, with
$u\mathbin{\neq} v$. Write $P^*\mathbin{:=}\bar{v}z | \sI(P)$ and $Q^*\mathbin{:=}v(y).\sI(Q)$. Then
\[\begin{array}{rcl}\sI(P'|Q') &=& (u)(\bar{x}u | u(v).P^*) ~|~ x(u).(v)(\bar{u}v | Q^*)\\
&\longmapsto& (u)\big(u(v).P^* ~|~ (v)(\bar{u}v | Q^*)\big)\\
&\longmapsto& (v)(P^* | Q^*)\\
&\longmapsto& \sI(P) | (\sI(Q)\subs{z}{y})\;.
\end{array}\]
Here structural congruence is applied in omitting parallel components $\bm{0}$ and empty binders
$(u)$ and $(v)$. Now the crucial idea in our proof is that the last two reductions are \emph{inert},
in that set of the potential behaviours of a process is not diminished by doing (internal) steps of
this kind. The first reduction above in general is not inert, as it creates a commitment between a
sender and a receiver to communicate, and this commitment goes at the expense of the potential of
one of the two parties to do this communication with another partner.
We employ a relation that captures these inert reductions in a context.

\newcommand{\XX}{P}
\newcommand{\YY}{Q}
\begin{definition}[\cite{vG18a}]\rm\label{def:cool-rel}
Let $\equiv\!\Rrightarrow$ be the smallest relation on $\bbT_{\pima}$ such that
\begin{enumerate}
\item $(v)(\bar vy | \XX | v(z).\YY) \equiv\!\Rrightarrow \XX|(\YY\subs{y}{z})$,
\item if $\XX \equiv\!\Rrightarrow \YY$ then $\XX|C \equiv\!\Rrightarrow \YY|C$,
\item if $\XX \equiv\!\Rrightarrow \YY$ then $(w) \XX \equiv\!\Rrightarrow (w) \YY$,
\item if $\XX \equred \XX' \equiv\!\Rrightarrow \YY' \equred \YY$ then $\XX \equiv\!\Rrightarrow \YY$,
\end{enumerate}
where
$v \not\in \fn(\XX) \cup \fn(\YY\subs{y}{z})$.
\end{definition}

\noindent
First of all observe that whenever two processes are related by $\equiv\!\Rrightarrow$, an actual
reduction takes place.

\begin{lemma}[\cite{vG18a}]\label{lem1}\rm
If $P \equiv\!\Rrightarrow Q$ then $P \longmapsto Q$.
\end{lemma}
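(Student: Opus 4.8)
The plan is to induct on the derivation of $P \equiv\!\Rrightarrow Q$ from the four clauses of Definition~\ref{def:cool-rel}. The three inductive clauses reduce to one-line appeals to the matching rule of Definition~\ref{df:reduction}: for clause~2, the induction hypothesis $P'\longmapsto Q'$ together with the PAR rule gives $P'|C\longmapsto Q'|C$; for clause~3 the RES rule gives $(v)P'\longmapsto (v)Q'$; and for clause~4, from $P\equred P'\equiv\!\Rrightarrow Q'\equred Q$ the induction hypothesis yields $P'\longmapsto Q'$, and the rule closing $\longmapsto$ under $\equred$ on both sides then yields $P\longmapsto Q$.

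All the content sits in the base clause~1, where I must show $(v)(\bar vy | R | v(z).S) \longmapsto R | S\subs{y}{z}$ under the side condition $v\notin\fn(R)\cup\fn(S\subs{y}{z})$. First I would rearrange, by associativity and commutativity of $|$ (i.e.\ structural congruence), to $(v)\big(R | (\bar vy | v(z).S)\big)$; then fire the communication axiom $\bar vy | v(z).S \longmapsto \nil | S\subs{y}{z}$ --- an instance of the reduction axiom of Definition~\ref{df:reduction} with empty output continuation, as it must be in $\rm a\pi$ --- and propagate it, by the PAR rule to put $R$ in parallel and by the RES rule to go under the binder $(v)$, landing (modulo commutativity) at $(v)\big(R | (\nil | S\subs{y}{z})\big)$. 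Finally I would observe that this is structurally congruent to $R | S\subs{y}{z}$, by dropping the $\nil$ component and garbage-collecting the now-vacuous binder $(v)$. Since the whole chain --- the $\equred$-rearrangement on the left, the single reduction, and the $\equred$-simplification on the right --- fits within one application of the rule closing $\longmapsto$ under $\equred$, this establishes the required reduction.

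I do not expect a genuine obstacle. The one point that deserves care is that the presentation of $\equred$ in Definition~\ref{df:structural congruence} contains no primitive rule discarding an unused restriction, so the identity $(v)T\equred T$ for $v\notin\fn(T)$ must first be derived, e.g.\ as $(v)T \equred (v)(T|\nil) \equred T|(v)\nil \equred T|\nil \equred T$ using rules (3), (5) and (7). The freshness condition in clause~1 is exactly what makes rule~(7) applicable here, and it simultaneously rules out any name-capture issue in forming the substitution $S\subs{y}{z}$; apart from that, the proof is routine bookkeeping.
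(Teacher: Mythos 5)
Your proof is correct, and it is the natural argument: the paper itself does not prove this lemma but imports it from \cite{vG18a}, so there is no in-paper proof to compare against, and your rule induction with the communication axiom plus structural-congruence bookkeeping in the base case is exactly what is expected. One tiny nit: rule (7) of \df{structural congruence} is stated with the side condition $w\notin\n(P)$ (all names, not just free ones), so in your derived identity $(v)T\equred T$ you may first need to $\alpha$-convert bound occurrences of $v$ inside $T$ via rules (8)/(9) under the congruence closure before applying (7); the clause-1 freshness condition $v\notin\fn(R)\cup\fn(S\subs{y}{z})$ then suffices, as you say.
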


\noindent
The next two lemmas confirm that inert reductions do not diminish the
potential behaviour of a process.
\begin{lemma}[\cite{vG18a}]\label{lem2}\rm
If $P \equiv\!\Rrightarrow Q$ and $P \longmapsto P'$ with $P'\not\equred Q$
then there is a $Q'$ with $Q \longmapsto Q'$ and $P' \equiv\!\Rrightarrow Q'$.
\end{lemma}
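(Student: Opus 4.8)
The plan is to argue by induction on the derivation of $P \equiv\!\Rrightarrow Q$ from Definition~\ref{def:cool-rel}. It helps to first record a normal form: a routine induction on Definition~\ref{def:cool-rel} shows that $P \equiv\!\Rrightarrow Q$ holds exactly when there are a static context $\mathcal{C}$ (built from the hole, parallel composition and restriction), names $v,y,z$ with $v$ chosen globally fresh, and processes $R,S$ with $v \notin \fn(R) \cup \fn(S\subs{y}{z})$, such that
\[
P \equred \mathcal{C}\big[(v)(\bar vy \mid R \mid v(z).S)\big]
\qquad\text{and}\qquad
Q \equred \mathcal{C}\big[R \mid S\subs{y}{z}\big].
\]
The forward direction assembles $\mathcal{C}$ from clauses $2$ and $3$ of Definition~\ref{def:cool-rel}, with clause $4$ absorbing $\equred$ and the commutativity of $\mid$; the converse rebuilds the $\equiv\!\Rrightarrow$-derivation by induction on $\mathcal{C}$, applying clause $1$ at the hole (a vacuous binder $(v)$ around $R\mid S\subs{y}{z}$ is dropped by $\equred$).

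Given $P \longmapsto P'$, I would use the Harmony Lemma (Lemma~\ref{harmony}) to replace $P$ by $\mathcal{C}[(v)(\bar vy \mid R \mid v(z).S)]$ and $P'$ by a structurally congruent process --- harmless, since $\longmapsto$ and $\equiv\!\Rrightarrow$ are both closed under $\equred$ --- and then inspect the SOS derivation of the corresponding $\tau$-transition. Since $v$ is restricted at the hole and free in neither $R$, $S$, nor the process arguments of $\mathcal{C}$, the message $\bar vy$ and the input-guarded $v(z).S$ can interact only with one another, and $S$ remains guarded. Hence the redex is either (i) precisely the communication between $\bar vy$ and $v(z).S$, which yields $P' \equred \mathcal{C}[(v)(R \mid S\subs{y}{z})] \equred \mathcal{C}[R \mid S\subs{y}{z}] \equred Q$ and is excluded by the hypothesis $P' \not\equred Q$; or (ii) some redex disjoint from $\{\bar vy,\, v(z).S\}$ --- a reduction inside $R$, inside $\mathcal{C}$, or a communication between $R$ and $\mathcal{C}$ --- producing $R'$ and $\mathcal{C}'$ with $P' \equred \mathcal{C}'[(v)(\bar vy \mid R' \mid v(z).S)]$. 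In case (ii) the same redex is still enabled in $Q \equred \mathcal{C}[R \mid S\subs{y}{z}]$, because the particles it consumes are still present and $S$ was never touched; firing it gives $Q \longmapsto Q'$ with $Q' \equred \mathcal{C}'[R' \mid S\subs{y}{z}]$, and $P' \equiv\!\Rrightarrow Q'$ follows from the normal form, the side condition $v \notin \fn(R')$ surviving since reductions create no new free names and $v$ is fresh.

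The delicate point --- and the main obstacle --- is the bookkeeping in case (ii) when the other redex is a communication between $R$ and $\mathcal{C}$ that triggers scope extrusion via structural-congruence rule $\scriptstyle{(7)}$: one must verify that moving a binder of $\mathcal{C}$ across the hole, or extruding a name over it, neither disables the inert redex nor captures or transmits $v$ --- which is exactly why $v$ is taken globally fresh and kept disjoint from all bound names in sight. An equivalent but perhaps more transparent organisation avoids the normal form and inducts directly on Definition~\ref{def:cool-rel}: clause $1$ is the base (the only non-excluded reduction is one inside $R$); clause $2$ ($P = X \mid C$) splits into a reduction inside $X$ (invoke the induction hypothesis, unless it reaches $Y$, in which case it is excluded), a reduction inside $C$ (replay it from $Y \mid C$), or a communication between $X$ and $C$ (here one uses that $X \equiv\!\Rrightarrow Y$ discards only particles living on the restricted channel $v$, hence preserves every top-level unguarded input or output of $X$ on a free channel, so $Y \mid C$ matches and the inert step descends to the residuals); clause $3$ reduces any reduction of $(w)X$ to one of $X$ via Lemma~\ref{harmony} and rule (RES); and clause $4$ transports the reduction across $\equred$ again by Lemma~\ref{harmony}. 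In every non-excluded case the witness $Q'$ is reassembled from the residual inert redex by the same context, closing up under clauses $1$--$4$ of Definition~\ref{def:cool-rel}.
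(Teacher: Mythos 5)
This lemma is one of the results the paper imports verbatim from \cite{vG18a}; the paper itself contains no proof of it (only Corollary~\ref{lem2 transitive} is derived from it), so there is no in-paper argument to compare yours against line by line. Judged on its own, your proposal is sound and is essentially the standard argument one would expect: the normal-form characterisation of $\equiv\!\Rrightarrow$ (a static context around a single inert redex $(v)(\bar vy \mid R \mid v(z).S)$, up to $\equred$, with $v\notin\fn(R)\cup\fn(S\subs{y}{z})$) is correct and is the real content of the proof, since the freshness of the restricted $v$ is exactly what forces any reduction of $P$ either to be the inert communication itself (yielding $P'\equred Q$, excluded by hypothesis) or to be a redex disjoint from $\{\bar vy,\,v(z).S\}$ that is still enabled in $Q\equred\mathcal{C}[R\mid S\subs{y}{z}]$ and commutes with the inert step. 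Your case analysis is exhaustive for the reasons you give, and the scope-extrusion bookkeeping you flag as delicate does go through: a fresh private name $w$ extruded between $R$ and the context becomes part of the new context $\mathcal{C}'$, is distinct from the globally fresh $v$, and can be kept out of $\fn(S\subs{y}{z})$ by $\alpha$-conversion, so the residual pair is again in normal form. Two small points to tighten if you write this out in full: (i) localising the redex of a reduction-semantics step up to $\equred$ needs an explicit appeal to a decomposition lemma (\eg the variant of Lemma 1.2.20 of \cite{SW01book} that this paper itself uses in Lemma~\ref{postponed barbs}) or to the Harmony Lemma~\ref{harmony} plus rule inversion, as you indicate; and (ii) in the normal form the hole sits only under parallel composition and restriction, but the parallel components of $\mathcal{C}$ are arbitrary processes (possibly replicated), so ``reduction inside $\mathcal{C}$'' must include communications between two context components and replication-internal steps---your argument covers these, but the statement of the normal form should make it explicit. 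Your alternative organisation by direct induction on the derivation in Definition~\ref{def:cool-rel} is equally viable and is presumably closer to the proof given in \cite{vG18a}.
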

\begin{corollary}\label{lem2 transitive}\rm
If $P \equiv\!\Rrightarrow^* Q$ and $P \longmapsto P'$
then either $P'\mathbin{\equiv\!\Rrightarrow^*} Q$ or there is a $Q'$ with $Q \longmapsto Q'$ and $P' \equiv\!\Rrightarrow^* Q'$.
\end{corollary}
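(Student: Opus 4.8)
The plan is to prove Corollary~\ref{lem2 transitive} by induction on the length $n$ of the $\equiv\!\Rrightarrow$-chain witnessing $P \equiv\!\Rrightarrow^* Q$, with Lemma~\ref{lem2} supplying all the real content. Throughout I would work up to structural congruence, as is standard, so that $\equred$-steps are invisible and in particular $\equiv\!\Rrightarrow^*$ absorbs $\equred$; this is harmless since $\longmapsto$ is closed under $\equred$, and it is precisely what clause~4 of Definition~\ref{def:cool-rel} bakes into $\equiv\!\Rrightarrow$ itself. For the base case $n=0$ we have $P \equred Q$, so given $P \longmapsto P'$ we simply pick $Q' := P'$: then $Q \longmapsto Q'$ and $P' \equiv\!\Rrightarrow^* Q'$ trivially, which is the second disjunct.

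For the inductive step, write the chain as $P \equiv\!\Rrightarrow R \equiv\!\Rrightarrow^* Q$, where the tail $R \equiv\!\Rrightarrow^* Q$ has length $n-1$, and suppose $P \longmapsto P'$. I would apply Lemma~\ref{lem2} to $P \equiv\!\Rrightarrow R$ and $P \longmapsto P'$, distinguishing the two cases it leaves open. If $P' \equred R$, then together with $R \equiv\!\Rrightarrow^* Q$ we obtain $P' \equiv\!\Rrightarrow^* Q$: when the tail is nonempty, the combination $P' \equred R \equiv\!\Rrightarrow (\cdot)$ collapses to a single $\equiv\!\Rrightarrow$-step by clause~4 of Definition~\ref{def:cool-rel} and we prepend it to the rest of the tail; when the tail is empty (so $R = Q$) we are left with $P' \equred Q$, i.e.\ $P' \equiv\!\Rrightarrow^* Q$ under the up-to-$\equred$ convention. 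Either way the first disjunct holds.

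If instead $P' \not\equred R$, Lemma~\ref{lem2} yields an $R'$ with $R \longmapsto R'$ and $P' \equiv\!\Rrightarrow R'$. Now I invoke the induction hypothesis on the shorter chain $R \equiv\!\Rrightarrow^* Q$ together with $R \longmapsto R'$: either $R' \equiv\!\Rrightarrow^* Q$, whence $P' \equiv\!\Rrightarrow R' \equiv\!\Rrightarrow^* Q$ gives the first disjunct; or there is a $Q'$ with $Q \longmapsto Q'$ and $R' \equiv\!\Rrightarrow^* Q'$, whence $P' \equiv\!\Rrightarrow R' \equiv\!\Rrightarrow^* Q'$ gives the second. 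This closes the induction. I do not expect a genuine obstacle: the statement is a routine chaining of Lemma~\ref{lem2} along the chain, and the only point that demands care is the bookkeeping of structural congruence in the case $P' \equred R$ — which is exactly the reason clause~4 was included in Definition~\ref{def:cool-rel}.
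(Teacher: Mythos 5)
Your proof is correct and is essentially the paper's own argument: the paper disposes of the corollary with ``by repeated application of Lemma~\ref{lem2}'', which is exactly your induction on the length of the $\equiv\!\Rrightarrow$-chain spelled out. Your explicit treatment of the $P'\equred R$ case via clause~4 of Definition~\ref{def:cool-rel} (and the up-to-$\equred$ reading when the tail is empty) just makes precise the bookkeeping the paper leaves implicit.
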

\begin{proof}
By repeated application of Lemma~\ref{lem2}.
\qed
\end{proof}

\begin{lemma}\label{postponed barbs}\rm
If $P \equiv\!\Rrightarrow Q$ and $P \scomm a$ for $a \in \{ x, \bar x \,|\, x \in \mathcal N \}$
then $Q \scomm a$.
\end{lemma}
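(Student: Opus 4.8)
The plan is to prove the statement by induction on the derivation of $P \equiv\!\Rrightarrow Q$ from the four clauses of Definition~\ref{def:cool-rel}. Throughout I would use the evident facts — already exploited in the proof of \lem{barbs} — that the strong barbs of $P_1\mid P_2$ are the union of those of $P_1$ and $P_2$, and that the strong barbs of $(w)P_1$ are those of $P_1$ minus $\{w,\bar w\}$; together with the fact that $\equred$ neither creates nor destroys strong barbs, which is immediate from the Harmony Lemma (Lemma~\ref{harmony}(1)) and the labelled-transition characterisation of strong barbs in the Remark of \sect{processcalculi}.

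For the base case (clause 1), with $P=(v)(\bar vy\mid X\mid v(z).Y)$ and $Q=X\mid Y\subs{y}{z}$: the component $\bar vy$ has only the strong barb $\bar v$ and $v(z).Y$ has only the strong barb $v$ (its subprocess $Y$ being guarded), so under the binder $(v)$ neither contributes any barb; and since the side condition of Definition~\ref{def:cool-rel} gives $v\notin\fn(X)$, the process $X$ has no barb on $v$ or $\bar v$, so all of its barbs survive the restriction. Hence the strong barbs of $P$ are exactly those of $X$. As the strong barbs of $Q=X\mid Y\subs{y}{z}$ include those of $X$, any $a$ with $P\scomm a$ satisfies $Q\scomm a$.

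For the inductive cases: in clause 2 ($X\mid C \equiv\!\Rrightarrow Y\mid C$ from $X\equiv\!\Rrightarrow Y$), a barb $a$ of $X\mid C$ is a barb of $X$ — carried to $Y$ by the induction hypothesis — or a barb of $C$; either way $Y\mid C\scomm a$. In clause 3 ($(w)X\equiv\!\Rrightarrow(w)Y$), a barb $a$ of $(w)X$ is a barb of $X$ with $a\notin\{w,\bar w\}$; the induction hypothesis gives the barb for $Y$, and since $a\notin\{w,\bar w\}$ it is a barb of $(w)Y$. In clause 4 ($X\equred X'\equiv\!\Rrightarrow Y'\equred Y$), from $P=X\scomm a$ we get $X'\scomm a$ since $\equred$ preserves barbs, then $Y'\scomm a$ by the induction hypothesis, then $Y\scomm a$ again by $\equred$.

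The only mildly delicate point is the base case, and there the side condition $v\notin\fn(X)\cup\fn(Y\subs{y}{z})$ is exactly what makes the communicating channel $v$ invisible as a barb, so that erasing the communication cannot remove any barb of $X$; the remaining cases are routine bookkeeping with the union/restriction behaviour of barbs and the $\equred$-invariance of barbs.
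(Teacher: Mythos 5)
Your proof is correct, but it takes a different route from the paper's. You argue by induction on the derivation of $P \equiv\!\Rrightarrow Q$ from the four clauses of Definition~\ref{def:cool-rel}, using the barb-composition facts (barbs of a parallel composition are the union, restriction removes exactly the barbs on the bound name, and $\equred$ preserves strong barbs via Lemma~\ref{harmony}); the only substantive case is clause~1, where the side condition $v\notin\fn(X)$ ensures the inert communication on the private channel $v$ contributes no observable barb, so the barbs of $P$ are exactly those of $X$ and hence contained in those of $Q$. The paper instead avoids an explicit induction: it invokes a (trivial variant of) the normal-form decomposition, Lemma~1.2.20 of Sangiorgi \& Walker, to write $P \equred (\tilde w)((\bar x y \mid x(z).R)\mid C)$ and $Q \equred (\tilde w)((\mathbf{0}\mid R\subs{y}{z})\mid C)$ with $x\in\tilde w$, and then observes in one stroke that any barb of $P$ must be on a name outside $\tilde w$, hence must stem from $C$ and therefore persists in $Q$. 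The underlying insight is identical --- an inert step communicates over a restricted channel, so it is invisible to barbs and the context's barbs are untouched --- but the decompositions differ: the paper flattens the inductive structure of $\equiv\!\Rrightarrow$ into a single canonical form borrowed from the literature, which is shorter but leans on that cited lemma (whose adaptation to $\equiv\!\Rrightarrow$ would itself be proved by essentially your induction), whereas your argument is self-contained, makes explicit where the freshness condition on $v$ is used, and pays for this with routine bookkeeping in clauses 2--4.
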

\begin{proof}
\newcommand{\rn}{z}
\newcommand{\un}{y}
Let $(\tilde w)P$ for $\tilde w \mathbin=\{w_1,\dots,w_n\}\mathbin\subseteq \mathcal{N}$
with $n\mathbin\in\mbox{\bbb N}$ denote $(w_1)\cdots(w_n)P$ for some
arbitrary order of the $(w_i)$.
Using a trivial variant of Lemma 1.2.20 in \cite{SW01book},
there are
$\tilde w \subseteq\mathcal{N}$, $x,\un,\rn\mathbin\in\mathcal{N}$
and $R,C\mathbin\in\bbT_{\pima}$, such that $x\in\tilde w$ and
$P \mathbin{\equred} (\tilde w)((\bar x \un | x(\rn) . R) | C ) \mathbin{\longmapsto}
(\tilde w)(( \bm{0} | R\subs{\un}{\rn}) | C ) \mathbin{\equred} Q$.
Since $P{\sbarb{a}}$, it must be that $a{=}u$ or $\bar u$ with $u\notin\tilde w$, and
$C{\sbarb{a}}$. Hence $Q{\sbarb{a}}$.
\qed
\end{proof}

\noindent
The following lemma states, in terms of Gorla's framework, {\em operational completeness}~\cite{Gorla10a}:
if a source term is able to make a step, then its translation is able to simulate that step by protocol steps.

\begin{lemma}[\cite{vG18a}]\label{lem5}
Let $P,P' \mathbin\in \bbT_{\pims}$.
If $P \longmapsto P'$ then $\sI(P) \longmapsto^* \sI(P')$.
\end{lemma}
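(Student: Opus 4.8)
The plan is to prove Lemma~\ref{lem5} by structural induction on $P$, but with the real work located in the interaction between the compositional structure of $\sI$ and the reduction steps. First I would observe that a single reduction $P\longmapsto P'$ of a source term arises — up to structural congruence, via the Harmony Lemma (Lemma~\ref{harmony}) — from exactly one application of the communication axiom $\bar xz.P_1 \mid x(y).P_2 \longmapsto P_1 \mid P_2\subs{z}{y}$ nested inside a context built from $\mid$, restriction and replication. So the induction on $P$ reduces, in the base interaction case, to showing that $\sI(\bar xz.P_1 \mid x(y).P_2) \longmapsto^* \sI(P_1 \mid P_2\subs{z}{y})$, and in the compositional cases ($P\mid Q$, $(x)P$, $!P$) to propagating the induction hypothesis through the translation clauses.

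The base case is exactly the three-step reduction already spelled out in the appendix preamble: $\sI(\bar xz.P_1 \mid x(y).P_2) \longmapsto \longmapsto \longmapsto \sI(P_1)\mid(\sI(P_2)\subs{z}{y})$, where the last two steps are the ``inert'' protocol steps. Here I would need to be careful about two routine points: that $\sI(P_2)\subs{z}{y} \equred \sI(P_2\subs{z}{y})$ (the translation commutes with substitution of free names, which holds because $\sI$ only renames bound names and the freshness side-conditions keep $u,v$ out of the way), and that structural congruence may be used freely thanks to Lemma~\ref{harmony}(2) — so $\longmapsto^*$ is really a relation on $\equred$-classes. For the parallel case, $\sI(P\mid Q) = \sI(P)\mid\sI(Q)$, and a reduction of $P\mid Q$ is either a reduction inside $P$, inside $Q$, or a communication across the interface; the first two follow immediately from the induction hypothesis together with the congruence of $\longmapsto^*$ under $\mid$, and the cross-interface communication is again handled by the base-case computation applied to the relevant sender/receiver pair sitting inside $\sI(P)\mid\sI(Q)$, using that $\longmapsto$ is closed under parallel contexts and restriction. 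The restriction case $(x)P$ is immediate since $\sI((x)P)=(x)\sI(P)$ and $\longmapsto$ is preserved by $(x)\_$. The replication case uses the structural law $!P \equred P\mid !P$ together with $\sI(!P) = {!}\sI(P)$ and $\sI(P\mid !P) = \sI(P)\mid{!}\sI(P) \equred {!}\sI(P)$, so a reduction of $!P$ — which by Harmony comes from one of $P\mid !P$ — is simulated by the corresponding reduction of $\sI(P)\mid{!}\sI(P) \equred \sI(!P)$.

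The main obstacle I expect is bookkeeping rather than a conceptual hurdle: correctly tracking the freshness conditions on the names $u,v$ introduced by the translation clauses so that (a) the protocol steps in the base case actually fire (the private channel $u$ must not be captured, and the acknowledgement channel $v$ must be fresh for both parties), and (b) the equation $\sI(P)\subs{z}{y} \equred \sI(P\subs{z}{y})$ genuinely holds despite the bound-name renamings hidden in the definition of $\sI$. A clean way to manage this is to work up to $\equred$ throughout — legitimate by Lemma~\ref{harmony} — and to invoke (a trivial variant of) the standard name-partition lemma for the $\pi$-calculus, exactly as is done in the proof of Lemma~\ref{postponed barbs} above, to push all restrictions to the top and expose the interacting prefixes. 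Note finally that this lemma only asserts $\longmapsto^*$ (not the inert relation $\equiv\!\Rrightarrow^*$), so I need not argue that any of the simulating steps are inert here; that refinement is presumably the subject of a later lemma and of the main validity theorem, and is not required for Lemma~\ref{lem5} itself.
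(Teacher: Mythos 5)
The paper does not actually prove this lemma: it is imported verbatim from \cite{vG18a} (note the citation in the lemma header), so there is no in-paper proof to compare yours against. Judged on its own merits, your route is the natural one and is essentially sound: reduce a source step, via the Harmony Lemma or the decomposition lemma of Sangiorgi--Walker, to one occurrence of the communication axiom in an evaluation context, discharge that redex by the three-step protocol computation displayed in the appendix preamble, and propagate through the homomorphic clauses of $\fT_{\rm B}$ for $|$, restriction and replication. Your two flagged bookkeeping points (commutation of $\fT_{\rm B}$ with substitution of free names, and freshness of $u,v$) are indeed the right ones to worry about.

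One ingredient does need to be stated and proved explicitly, and it is not covered by what you invoke. Because the reduction relation of \df{reduction} is closed under structural congruence, the case analysis you describe only goes through if you also know that the \emph{translation preserves structural congruence}: $P \equred Q$ implies $\fT_{\rm B}(P) \equred \fT_{\rm B}(Q)$. This is what lets you transfer the source-level rearrangement $P \equred (\nu\tilde w)(\bar xz.P_1 \mid x(y).P_2 \mid R)$ to a statement about $\fT_{\rm B}(P)$, and it is what actually handles the $\equred$-closure rule in the derivation of $P\longmapsto P'$ (including your ``cross-interface'' communication case, which in the reduction semantics only arises via such rearrangement, and the replication case, which arises via $!P\equred P\mid{!P}$). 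Your appeal to Lemma~\ref{harmony}(2) does not supply this: Harmony concerns the relation between reductions and $\tau$-transitions in one and the same language, whereas what is needed here is a property of the encoding itself, proved by induction on the derivation of $\equred$ (its alpha-conversion axioms being exactly where your substitution lemma and the freshness side-conditions on $u,v$ enter). Relatedly, your induction is better phrased as induction on the derivation of $P\longmapsto P'$ (or a direct application of the decomposition lemma) rather than structural induction on $P$, since the redex need not be syntactically visible in $P$. With that auxiliary congruence-preservation lemma added, your argument is complete.
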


\noindent
Finally, the next lemma was a crucial step in establishing \emph{operational soundness}~\cite{Gorla10a}.

\begin{lemma}[\cite{vG18a}]\label{lem6}\rm
Let $P \mathbin\in \bbT_{\pims}\!$ and $Q \mathbin\in \bbT_{\pima}\!$.
If $\sI(P) \mathbin{\longmapsto} Q$ then there is a $P'$ with $P \longmapsto P'$
and $Q\equiv\!\Rrightarrow^* \sI(P')$.
\end{lemma}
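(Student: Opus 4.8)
The plan is to move to the labelled transition semantics and argue by structural induction on $P$. First I would use the Harmony Lemma (Lemma~\ref{harmony}) to replace the reduction $\sI(P)\longmapsto Q$ by a transition $\sI(P)\stackrel{\tau}{\longrightarrow}Q_0$ with $Q_0\equred Q$; the trailing $\equred$ does no harm, since by clause~(4) of the definition of $\equiv\!\Rrightarrow$ (Definition~\ref{def:cool-rel}) this relation absorbs structural congruence on either side. So it suffices to show that $\sI(P)\stackrel{\tau}{\longrightarrow}Q_0$ implies $P\longmapsto P'$ and $Q_0\equiv\!\Rrightarrow^*\sI(P')$ for some $P'$. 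Alongside this I would prove, by the same induction, a companion statement describing the \emph{visible} transitions of $\sI(P)$: every non-$\tau$ transition of $\sI(P)$ is an input $x(w)$ or a bound output $\bar x(w)$ with $x\in\fn(P)$ and $w$ fresh, arising from a top-level unguarded prefix $x(y).R$, resp.\ $\bar x z.R$, of $P$ (possibly after unfolding replications), and its target is obtained from $\sI(P)$ by replacing that translated prefix by its \emph{protocol-intermediate form}: $\sI(x(y).R)=x(u).(v)(\bar uv\mid v(y).\sI(R))$ turns into $(v)(\bar wv\mid v(y).\sI(R))$, and $\sI(\bar xz.R)=(u)(\bar xu\mid u(v).(\bar vz\mid\sI(R)))$ turns into $w(v).(\bar vz\mid\sI(R))$. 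The key consequence is that $\sI(P)$ never performs a free output on a free name of $P$; hence any $\tau$-transition of $\sI(P)$ is a \textsc{close} or \textsc{rep-close} that synchronises the bound output of a translated output prefix with the input of a translated input prefix.

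Now the induction. The cases $P=\nil$, $P=\bar xz.R$ and $P=x(y).R$ are vacuous for the reduction part: these translations have no $\tau$-transition, as their only candidate communication partners sit on distinct channels or under guards. For $P=(\nu a)R$, for $P=R\mid S$ with the $\tau$ internal to one component, and for $P=!R$ via \textsc{rep-act} with an internal $\tau$ (after unfolding $!R\equred R\mid !R$), the claim follows from the induction hypothesis together with the fact that $\equiv\!\Rrightarrow^*$ is preserved by restriction contexts and parallel contexts (clauses~(3) and~(2) of Definition~\ref{def:cool-rel}) and that source reductions are closed under $\equred$. The crux is the remaining case: a $\tau$ of $\sI(R\mid S)$, or of $\sI(!M)$ after one unfolding, that synchronises on a channel $x$ a translated output prefix $\bar xz.R_1$ exposed in one part with a translated input prefix $x(y).R_2$ exposed in the other. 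Using Harmony and structural congruence I would write the relevant structure as $P\equred\mathcal D[\bar xz.R_1,\;x(y).R_2]$ for a static two-hole context $\mathcal D$ (built from $\mid$ and $(\nu)$) with $x$ not bound by $\mathcal D$; applying $\sI$, which is homomorphic on $\mid$, $!$ and $(\nu)$, gives $\sI(P)\equred\mathcal D'[\sI(\bar xz.R_1),\;\sI(x(y).R_2)]$, and the companion statement then identifies $Q_0\equred(w)\,\mathcal D'\big[w(v).(\bar vz\mid\sI(R_1)),\;(v)(\bar wv\mid v(y).\sI(R_2))\big]$, i.e.\ the committing protocol step has fired at that location.

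Take $P':=\mathcal D[R_1,\;R_2\subs{z}{y}]$, so that $P\equred\mathcal D[\bar xz.R_1,x(y).R_2]\longmapsto P'$ (fire the source communication, using the reduction rules for parallel composition, restriction and $\equred$-closure; the unfolding in the replication case is absorbed by $\equred$). It remains to show $Q_0\equiv\!\Rrightarrow^*\sI(P')=\mathcal D'[\sI(R_1),\sI(R_2)\subs{z}{y}]$, and for this I would use clause~(1) of $\equiv\!\Rrightarrow$ twice, wrapped in clauses~(2)--(4) to carry the static context and rearrange binders, to establish
\[(w)\big(w(v).(\bar vz\mid A)\mid(v)(\bar wv\mid v(y).B)\big)\;\equiv\!\Rrightarrow\;(v)(\bar vz\mid A\mid v(y).B)\;\equiv\!\Rrightarrow\;A\mid B\subs{z}{y}\]
with $A=\sI(R_1)$ and $B=\sI(R_2)$ --- exactly the two inert reductions isolated at the start of this appendix --- and then push this $\equiv\!\Rrightarrow^*$ through $\mathcal D'$ by clauses~(2) and~(3), clause~(4) absorbing the initial $\equred$.

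The hard part is the companion statement and the bookkeeping around it: one has to verify that a $\tau$ of a translated term can \emph{only} occur at such a location and with exactly that residual, which means chasing the scope extrusion of the private channel $w$ through the \textsc{open}/\textsc{close} interplay, keeping the freshly chosen names $u,v,w$ disjoint from everything by $\alpha$-conversion, and matching the two-hole context $\mathcal D$ with its translation $\mathcal D'$ --- in particular handling the sub-case where the two exposed prefixes lie under a common replication, which forces an initial unfolding. The residual computation in the displayed line and the homomorphic cases are routine.
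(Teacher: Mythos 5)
The paper does not actually prove this lemma: it is imported verbatim from \cite{vG18a}, and the appendix only \emph{uses} it (together with Lemmas~\ref{lem1}--\ref{postponed barbs}) to establish Theorem~\ref{thm:boudolwrequ}. So there is no in-paper proof to compare yours against; I can only judge your plan on its own merits and against the hints the appendix gives.

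On those terms your proposal is essentially right, and it follows exactly the route the surrounding material suggests: pass to the labelled semantics via the Harmony Lemma; prove by induction a companion characterisation showing that $\fT_{\rm B}(P)$ only offers inputs $x(w)$ and \emph{bound} outputs $\bar x(w)$ (never free outputs), so that every $\tau$-transition of a translated term is a CLOSE/REP-CLOSE firing the committing first protocol step between a translated output prefix and a translated input prefix; match that step by the corresponding source communication $P\longmapsto P'$; and discharge the residual by the two inert reductions displayed at the start of the appendix, each an instance of clause~(1) of Definition~\ref{def:cool-rel}, wrapped in clauses~(2)--(4) to carry the static context and absorb $\equred$. A few bookkeeping points you should make explicit if you write this out in full: (i) in the REP-CLOSE sub-case the source side needs \emph{two} unfoldings, $!R \equred R\,|\,R\,|\,!R$, since the two synchronising prefixes come from two copies of $R$ (harmless, as you say, because reduction is closed under $\equred$); (ii) the assertion that $x$ is not bound by $\mathcal D$ is only legitimate because you delegate the restriction case $P=(\nu a)R$ entirely to the induction hypothesis, and the $(w)$ binder created by CLOSE sits at the lowest common parallel node of $\mathcal D'$ and has to be pulled outward by scope extrusion before your displayed chain applies; (iii) you silently use that $\fT_{\rm B}$ preserves structural congruence and commutes with substitution up to $\alpha$-conversion, i.e.\ $\fT_{\rm B}(R\subs{z}{y}) \eqa \fT_{\rm B}(R)\subs{z}{y}$ with a suitable choice of the fresh names $u,v$ -- routine, but it is exactly where the freshness side conditions of the encoding are needed; and (iv) absorbing the leading $\equred$ coming from the Harmony Lemma into $\equiv\!\Rrightarrow^*$ via clause~(4) requires at least one $\equiv\!\Rrightarrow$ step, which is guaranteed because your analysis shows the residual of a committing step is always exactly two inert steps away from a literal translation. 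None of these is a gap in the idea; they are the ``hard part'' bookkeeping you already flag.
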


\noindent
Using these lemmas, we prove the validity of Boudol's encoding up to weak barbed bisimilarity.

\begin{theorem}\label{thm:boudolwrequ}\rm
Boudol's encoding is valid up to $\wbb$.
\end{theorem}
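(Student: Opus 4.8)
The goal is to exhibit a weak barbed bisimulation relating $P$ and $\sI(P)$ for every $P\in\bbT_\pims$. The natural candidate is the relation
$$\mathcal{R} := \{ (R, Q) \mid R\in\bbT_\pims,\ Q\in\bbT_\pima,\ \sI(R) \equiv\!\Rrightarrow^* Q \}$$
together with its inverse, and then I would take the symmetric closure $\mathcal{R}\cup\mathcal{R}^{-1}$. Note $(P,\sI(P))\in\mathcal{R}$ since $\sI(P) \equiv\!\Rrightarrow^* \sI(P)$ via the empty chain, so once $\mathcal{R}\cup\mathcal{R}^{-1}$ is shown to be a weak barbed bisimulation, the theorem follows. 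The idea is that $Q$, a target term reachable from $\sI(R)$ by inert reductions only, carries exactly the same potential behaviour and the same barbs as $\sI(R)$, which in turn matches $R$.

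\textbf{Key steps.} First, the barb conditions. If $R\scomm a$, I would use \lem{barbs} to get $\sI(R)\scomm a$, and then repeated application of \lem{postponed barbs} along the chain $\sI(R)\equiv\!\Rrightarrow^* Q$ to conclude $Q\scomm a$, hence certainly $Q\ocomm a$. Conversely, if $Q\scomm a$ for $(R,Q)\in\mathcal R$: since $Q$ is obtained from $\sI(R)$ by actual reductions (\lem{lem1}), i.e.\ $\sI(R)\longmapsto^* Q$, \lem{lem6} (iterated) gives a source term $R'$ with $R\Longmapsto R'$ and $Q\equiv\!\Rrightarrow^* \sI(R')$; then another pass of \lem{postponed barbs} pushes the barb $a$ from $Q$ to $\sI(R')$, and \lem{barbs} transfers it back to $R'$, so $R\ocomm a$ — giving the weak barb required on the source side. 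Second, the reduction transfer. Given $(R,Q)\in\mathcal R$ and a target step $Q\longmapsto Q'$: apply \cor{lem2 transitive} (Corollary~\ref{lem2 transitive}) to the chain $\sI(R)\equiv\!\Rrightarrow^* Q$ and the step $Q\longmapsto Q'$. Either $Q'\equiv\!\Rrightarrow^* Q$ — wait, that branch reads the other way; more precisely the corollary tells us that from $\sI(R)\longmapsto^* Q \longmapsto Q'$ we can absorb $Q\longmapsto Q'$ into the inert chain, concluding $\sI(R)\equiv\!\Rrightarrow^* Q'$ (possibly after noting $Q'$ is the new endpoint), so $(R,Q')\in\mathcal R$ and $R$ matches by the empty move. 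Actually the cleanest route: since $\sI(R)\longmapsto^* Q\longmapsto Q'$, \lem{lem6} iterated yields $R'$ with $R\Longmapsto R'$ and $Q'\equiv\!\Rrightarrow^*\sI(R')$, i.e.\ $(R',Q')\in\mathcal R$; and \lem{lem5} ensures the source can also just stay put when $R\Longmapsto R'$ is trivial. For the source-side step, given $(R,Q)\in\mathcal R$ and $R\longmapsto R'$: by \lem{lem5}, $\sI(R)\longmapsto^*\sI(R')$, and combined with $\sI(R)\longmapsto^* Q$ I need to reconcile these two reduction sequences from a common term; here I would invoke \lem{lem2} / \cor{lem2 transitive} to show the inert steps leading to $Q$ commute with the protocol steps realizing $R\longmapsto R'$, producing a $Q'$ with $Q\Longmapsto Q'$ and $(R',Q')\in\mathcal R$.

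\textbf{Main obstacle.} The delicate point is the source-side reduction case: I have two reduction sequences out of $\sI(R)$ — the inert chain to $Q$, and the protocol simulation of $R\longmapsto R'$ from \lem{lem5} — and I must show they can be brought together, i.e.\ that the inert prefix does not obstruct the protocol-completing steps and vice versa. This is essentially a confluence/commutation argument, and it is exactly where \lem{lem2} (inert reductions do not diminish potential behaviour and can be postponed past genuine reductions) and \cor{lem2 transitive} must be applied with care: one must track whether a genuine reduction $\longmapsto$ is ``already inert'' ($P'\equiv\!\Rrightarrow^* Q$) or forces a matching move on the other side. Handling the symmetric closure correctly — so that pairs $(Q,R)\in\mathcal R^{-1}$ also satisfy both clauses, which is where \lem{lem6} does the heavy lifting — is the bookkeeping that makes the proof go through.
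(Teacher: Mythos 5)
Your candidate relation is oriented the wrong way round, and this is not a cosmetic slip: with your definition, the symmetric closure of $\mathcal{R}=\{(R,Q)\mid \fT_{\rm B}(R) \equiv\!\Rrightarrow^* Q\}$ is not a weak barbed bisimulation. The paper relates $P$ to $Q$ when $Q \equiv\!\Rrightarrow^* \fT_{\rm B}(P)$, i.e.\ when the \emph{target} term can still complete its pending inert protocol steps and thereby reach the literal translation of the current source state; your relation instead collects targets reachable \emph{from} $\fT_{\rm B}(R)$ by inert steps. The difference shows up already at the committing step of the protocol. Take $R=\bar xz.\bar yw.\mathbf{0}\mid x(a).\mathbf{0}$ and let $\fT_{\rm B}(R)\longmapsto T_1$ be the first (committing) reduction, a communication on the free channel $x$, which is not an inert step. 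To match it inside your relation you would need some $R'$ with $R\Longmapsto R'$ and $\fT_{\rm B}(R')\equiv\!\Rrightarrow^* T_1$ (or, via the symmetric closure, $\fT_{\rm B}(T_1)\equiv\!\Rrightarrow^* R'$), and no such $R'$ exists: $\fT_{\rm B}(R)$ cannot reach $T_1$ by inert steps, and $\fT_{\rm B}(\bar yw\mid\mathbf{0})$ has no reductions at all. With the paper's orientation this very step is matched by $R\longmapsto \bar yw\mid\mathbf{0}$, because $T_1\equiv\!\Rrightarrow^*\fT_{\rm B}(\bar yw\mid\mathbf{0})$ --- which is exactly what Lemma~\ref{lem6} delivers. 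Note also that Lemma~\ref{lem6} only speaks about a \emph{single} step out of a literal translation; your repeated appeals to ``Lemma~\ref{lem6} iterated'' are not justified as stated, and the device that replaces such an iteration is precisely the correctly oriented relation together with Corollary~\ref{lem2 transitive}.

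Once the orientation is fixed, the proof is much less delicate than your ``main obstacle'' paragraph suggests, and no confluence argument is needed for the source-side case. For a source step $P\longmapsto P'$ one simply has $Q\longmapsto^*\fT_{\rm B}(P)\longmapsto^*\fT_{\rm B}(P')$ by Lemmas~\ref{lem1} and~\ref{lem5}, and $P'$ is related to $\fT_{\rm B}(P')$ by the empty chain. The only genuinely delicate case is a target step $Q\longmapsto Q'$: Corollary~\ref{lem2 transitive} either absorbs it into the inert chain (so $P$ stays put) or commutes it to a step $\fT_{\rm B}(P)\longmapsto Q''$ with $Q'\equiv\!\Rrightarrow^* Q''$, after which a \emph{single} application of Lemma~\ref{lem6} produces the matching source step. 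The barb clauses then go through as in the paper: $P\scomm{a}$ gives $\fT_{\rm B}(P)\scomm{a}$ by \lem{barbs} and hence $Q\ocomm{a}$ via $Q\longmapsto^*\fT_{\rm B}(P)$ (Lemma~\ref{lem1}); conversely, $Q\scomm{a}$ is pushed along the chain to $\fT_{\rm B}(P)$ by Lemma~\ref{postponed barbs} and transferred back to $P$ by \lem{barbs} --- no detour through Lemma~\ref{lem6} is needed there either.
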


\newcommand{\rel}{\mrel}
\begin{proof}
Define the relation $\rel$ by $P\rel Q$ iff $Q \mathrel{\equiv\!\Rrightarrow^*} \sI(P)$.
It suffices to show that the symmetric closure of $\rel$ is a weak barbed bisimulation.

To show that $\rel$ satisfies Clause 1 of \df{WBB},
suppose $P \rel Q$ and $P \scomm a$ for $a \in \{ x, \bar x \,|\, x \in \mathcal N \}$.
Then $\sI(P) \scomm a$ by \lem{barbs}.
Since  $Q \mathrel{\equiv\!\Rrightarrow^*} \sI(P)$, we obtain $Q \longmapsto^* \sI(P)$ by Lemma~\ref{lem1}, and
thus $Q \wcomm a$.\pagebreak[3]

To show that $\rel$ also satisfies Clause 2,
suppose $P \rel Q$ and $P \longmapsto P'$. Since $Q \mathrel{\equiv\!\Rrightarrow^*} \sI(P)$, by
Lemmas~\ref{lem1} and \ref{lem5} we have $Q \mathbin{\longmapsto^*} \sI(P) \mathbin{\longmapsto^*} \sI(P')$, and also $P' \rel \sI(P')$.

To show that $\rel^{-1}$ satisfies Clause 1, suppose $P \rel Q$ and $Q \scomm a$.
Since $Q \mathrel{\equiv\!\Rrightarrow^*} \sI(P)$, Lemma~\ref{postponed barbs} yields \mbox{$\sI(P) \scomm a$},
and \lem{barbs} gives $P \scomm a$, which implies $P \wcomm a$.

To show that $\rel^{-1}$ satisfies Clause 2, suppose
$P \rel Q$ and $Q \longmapsto Q'$. Since $Q \mathbin{\equiv\!\Rrightarrow^*} \sI(P)$, by
Corollary~\ref{lem2 transitive} either $Q' \equiv\!\Rrightarrow^* \sI(P)$
or there is a $Q''$ with $\sI(P) \longmapsto Q''$ and $Q' \mathbin{\equiv\!\Rrightarrow^*} Q''\!$.
In the first case $P \rel Q'$, so taking $P':=P$ we are done.
In the second case, by Lemma~\ref{lem6} there is a $P'$ with $P \longmapsto P'$ and $Q'' \equiv\!\Rrightarrow^* \sI(P')$.
We thus have $P' \rel Q'$.
\qed
\end{proof}
\end{document}